\documentclass[runningheads]{llncs}

\usepackage[T1]{fontenc}
\usepackage[utf8]{inputenc}
\usepackage{amsmath,amssymb,mathtools}
\usepackage[hidelinks]{hyperref}
\usepackage{xcolor}
\usepackage[normalem]{ulem}

\newif\ifshowchanges
\showchangestrue
\definecolor{reviewpurple}{RGB}{117,31,150}

\usepackage{etoolbox}

\AtEndEnvironment{proof}{\qed}

\newcommand{\negl}{\mathrm{negl}}
\newcommand{\QMA}{\textsf{QMA}}
\newcommand{\QCIP}{\textsf{QCIP}}
\newcommand{\QPT}{\textsf{QPT}}

\newcommand{\val}{\mathrm{val}}

\newcommand{\QCMA}{\textsf{QCMA}}
\newcommand{\Acc}{\operatorname{Acc}}
\newcommand{\E}{\mathbb{E}}

\newcommand{\cU}{\mathcal{U}}
\newcommand{\cT}{\mathcal{T}}
\newcommand{\cO}{\mathcal{O}}

\newcommand{\id}{\mathbb{I}}
\newcommand{\ket}[1]{\lvert #1\rangle}
\newcommand{\bra}[1]{\langle #1\rvert}

\newcommand{\proj}[1]{\lvert #1\rangle\!\langle #1\rvert}
\newcommand{\norm}[1]{\left\lVert #1\right\rVert}

\title{Quantum State Search with Classical Interactive Advice}

\author{Mohammed Barhoush}
\authorrunning{M. Barhoush}
\institute{IonQ Inc.}

\begin{document}
\setlength{\emergencystretch}{2em}
\maketitle

\begin{abstract}
We study whether a quantum verifier can find a hidden quantum state
by exchanging classical messages with an unbounded prover who knows
the state. The verifier has access to a reflection oracle about
the target.
For a Haar-random $n$-qubit target, we prove that achieving constant
average fidelity with the target requires
\[
                   (q+1)^2(B+1)=\Omega(2^n),
\]
where $q$ is the number of verifier queries and $B$ is the total
prover-to-verifier communication. The bound allows arbitrary
classical interaction and unrestricted computation between queries. Our proof extends the lower bound of Aaronson and Kuperberg (ToC'07) from
a single classical proof to an arbitrary classical interaction.

As a result, we obtain a quantum-oracle separation of
$\QMA$ from $\textsf{QCIP}$. {In particular}, we give a unitary oracle $\mathcal{U}$ such that there exists a language $L\in \textsf{QMA}^{\mathcal{U}}$ but $L\notin \QCIP^{\mathcal{U}}$. Our separation holds even for
$\textsf{UniqueQMA}$, the restriction of $\textsf{QMA}$ to verification
with a unique quantum witness, which is known to be strictly weaker
than $\textsf{QMA}$ in the quantum black-box setting.

We further give an oracle construction for a notion known as 
$\textsf{QMA}$--$\textsf{QCIP}[2]$ gap problem. This gives average case hardness for a $\QMA$ language under classical access to a $\textsf{QCIP}[2]$ decision oracle and is used in establishing negative results on witness encryption and quantum indistinguishability obfuscation.
\end{abstract}

\newpage

\section{Introduction}
\label{sec:intro}

Unstructured search is one of the basic problems in computer science.  For a search space of
size $2^n$, Grover's algorithm uses $O(2^{n/2})$ queries, and this dependence
is optimal~\cite{BennettBernsteinBrassardVazirani1997,Grover1996}.
A prover who knows the marked item can remove this query requirement by
sending its $n$-bit address to the verifier. Does classical interaction
provide a similar advantage when the hidden object is a quantum state?

Aaronson and Kuperberg~\cite{AaronsonKuperberg2007} studied this question
for a single classical message. They used an oracle that marks an unknown
$n$-qubit state $|\psi\rangle$ by applying the reflection
\[
                    R_\psi=\id-2|\psi\rangle\langle\psi|.
\]
They proved that polynomial classical advice is not sufficient to find the hidden state and distinguish  this oracle from the identity, given only polynomial queries. On the other hand, quantum advice easily allows distinguishing this oracle from the identity using $\psi$ as advice. 

{These findings have important implications for} the relationship between the classes $\QMA$ and $\textsf{QCMA}$. Informally, both these classes describe problems whose
solutions can be verified by a polynomial-time
quantum verifier, using a quantum witness in the case of 
$\textsf{QMA}$ and a classical witness in the case of $\textsf{QCMA}$.
The relation between these classes captures a fundamental question
in quantum complexity theory: can quantum witnesses enable more
powerful verification than classical witnesses? 

{Aaronson and Kuperberg's result shows} that a classical witness cannot do what a quantum witness can do in this oracle model, thus demonstrating the first oracle separation between $\QMA$ and $\textsf{QCMA}$. 

We ask whether the same advantage holds against classical \emph{interactive} advice. In our setting, the verifier {interacts with an unbounded prover} that knows $\psi$ and tries to help the verifier find the hidden state through classical interaction.  The verifier may retain quantum memory and interleave
classical messages with oracle queries. {Such a protocol cannot simply
be replaced by one short classical advice: there is not a necessarily  ``optimal'' transcript of interactive classical messages that can be sent in one-shot to the verifier since the success of a transcript may depend on the internal state of the verifier and quantum computations cannot be derandomized.}

Our main result extends the Aaronson--Kuperberg lower bound to arbitrary
classical interaction. 

\begin{theorem}[Hidden-state search with classical interaction, informal]
\label{thm:intro-main}
Let $|\psi\rangle$ be a Haar-random $n$-qubit state. If the verifier
makes $q$ oracle queries to the oracle $R_\psi$ and the prover sends at most $B$ classical
bits in total, then producing an output with constant average fidelity with
$|\psi\rangle$ requires
\[
                   (q+1)^2(B+1)=\Omega(2^n).
\]
\end{theorem}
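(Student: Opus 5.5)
We reduce the interactive setting to the one-shot setting of Aaronson and Kuperberg by conditioning on the transcript, but we do the conditioning inside an averaging argument rather than trying to fix a single ``best'' transcript. Fix any verifier $V$ making $q$ queries and any prover strategy $P$ sending at most $B$ bits in total. The whole interaction is a function of $\psi$ and of the verifier's internal randomness and measurement outcomes. We write the prover-to-verifier messages as a string $\tau\in\{0,1\}^{\le B}$. For each possible transcript $\tau$ we define the \emph{transcript-guided} verifier $V_\tau$: it runs $V$, but whenever $V$ expects a prover message it plugs in the next part of $\tau$. This is also where the verifier's own messages are consumed, since those are determined by $V$'s state. $V_\tau$ is a fixed $q$-query algorithm independent of $\psi$. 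Its output is an unnormalized state, because $V$'s measurements must be consistent with $\tau$.

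The key identity is the following. The final (subnormalized) state of the real interaction on target $\psi$ is $\sum_\tau \Pi_\tau(\psi)\rho_\tau(\psi)$. Here $\rho_\tau(\psi)$ is the output of $V_\tau^{R_\psi}$, projected onto the branch in which the verifier's messages match $\tau$. The indicator $\Pi_\tau(\psi)\in\{0,1\}$, or a probability for a randomized prover, records that $P$ would respond with $\tau$ given those verifier messages. Since $\Pi_\tau\le 1$, the fidelity of the real protocol is at most
\[
\E_\psi\sum_{\tau}\bra{\psi}\rho_\tau(\psi)\ket{\psi}\le 2^{B+1}\max_\tau \E_\psi \bra{\psi}\rho_\tau(\psi)\ket{\psi}.
\]
This is the standard ``guess the advice'' step. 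It costs $2^{B}$, which is too weak: we need a linear dependence on $B$. So instead of the union bound we use the Aaronson--Kuperberg style argument directly. We show that for each fixed $q$-query algorithm $A$, the set of $\psi$ on which $A$ achieves fidelity $\ge \varepsilon$ has Haar measure at most $\exp(-\Omega(\varepsilon\, 2^n/(q+1)^2))$. Then we union bound over the at most $2^{B+1}$ transcripts, but only on the \emph{good set} of $\psi$ where some $\tau$ with $\Pi_\tau(\psi)=1$ succeeds.

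The per-algorithm concentration bound proceeds in three steps. First, a hybrid argument replaces $R_\psi$ by $\id$. By the BBBV-type bound, $\norm{A^{R_\psi}-A^{\id}}\le 2\sum_t |\langle\psi|\phi_t\rangle|$, where $\phi_t$ are the query states of $A^{\id}$. Second, since $A^{\id}$ does not depend on $\psi$, the quantity $\sum_t|\langle\psi|\phi_t\rangle|^2+\bra\psi\rho^{\id}\ket\psi$ is a fixed linear functional of $\proj\psi$ with trace at most $q+1$. Third, Lévy/Haar concentration on the sphere shows that it exceeds $\varepsilon/(q+1)$ only with probability $\exp(-\Omega(\varepsilon 2^n/(q+1)^2))$. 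Setting this tail times $2^{B+1}$ below a constant gives $(q+1)^2(B+1)=\Omega(2^n)$.

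The main obstacle is the subtlety that $\rho_\tau$ depends on $\psi$ through the post-selection onto verifier messages consistent with $\tau$. The success event for each $\tau$ is therefore a subnormalized quantity, and the hybrid bound must be applied to the unnormalized branch. I would handle this by making all verifier measurements coherent: we purify, and the verifier's messages become copies into registers. Then $V_\tau$ is an isometry followed by a single projector, and the hybrid argument applies to the unnormalized output with the same error bound. A second point to check is that adaptivity of the prover is fully captured by $\Pi_\tau(\psi)$ being a $\{0,1\}$ (or $[0,1]$) weight that can simply be dropped as an upper bound.
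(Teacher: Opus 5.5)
\textbf{There is a genuine gap: the union bound over transcripts does not survive interaction.}

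Your per-algorithm concentration step is fine up to details. The exponent is really of order $\varepsilon^2 2^n/(q+1)^2$, and you need a Beta-type tail for $\bra{\psi}M\ket{\psi}$ rather than plain L\'evy to get $(q+1)^2$ instead of $(q+1)^4$. The problem is the step meant to avoid the $2^B$ loss.

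A union bound over the $2^{B+1}$ prover strings requires every good $\psi$ to lie in some set $G_\tau=\{\psi:\bra{\psi}\rho_\tau(\psi)\ket{\psi}\geq\varepsilon\}$, where $V_\tau$ is a fixed, $\psi$-independent algorithm. With one message this holds, because the whole successful run uses the single witness $w(\psi)$. With interaction it does not. The prover's reply depends on the verifier's messages, which are random and, after queries, $\psi$-dependent. So $\Pi_\tau(\psi)$ is not a scalar but a selection of verifier branches. Once you sum those branches out and drop $\Pi$, you are back at $\sum_\tau\bra{\psi}\rho_\tau(\psi)\ket{\psi}$, which is exactly the $2^B$ loss.

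Here is a concrete example. Take any one-message protocol $\{V_w\}_{w\in\{0,1\}^B}$ with constant fidelity and $q=o(2^{n/2})$, for instance $q=0$ with a long enough description of $\psi$. Let the verifier first send a uniform $a\in\{0,1\}^B$, let the prover answer $z=w(\psi)\oplus a$, and let the verifier run $V_{z\oplus a}$.
\begin{itemize}
\item The fidelity of the real protocol is unchanged.
\item Every hard-coded $V_z$ is the same fixed algorithm: it runs $V_w$ on a uniformly random $w$.
\item Its average fidelity with a Haar-random $\psi$ is therefore $O((q+1)2^{-n/2})$.
\end{itemize}
So for most good $\psi$ no $\tau$ succeeds, and the sets $G_\tau$ with a constant threshold do not cover the good set. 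Lowering the threshold to about $2^{-B}$ destroys the tail bound. Indexing by full transcripts instead costs $2^{B+B_V}$, where the verifier's communication $B_V$ is not bounded by the theorem, and each branch again carries only tiny subnormalized weight. Your purification handles subnormalization, not this attribution problem.

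\textbf{The missing idea is to average over the actual transcript distribution rather than take a worst case over transcripts.} The conditioning must also happen in an execution where verifier messages reveal nothing about $\psi$. The paper proceeds as follows.
\begin{itemize}
\item It fixes a deterministic prover $P_\psi$ for each $\psi$.
\item It works in the identity-prefix execution: queries before slot $t$ are answered by $\id$, and for the output register all queries are.
\item There, conditioned on the public transcript $h$, the verifier's state is fixed. The hidden $\psi$ is Haar-distributed on the compatible set $C(h)$ of states whose prover produces the recorded replies.
\item Verifier messages leave $C(h)$ unchanged. A $b$-bit prover message raises $\E[\ln(1/\mu_D(C(h)))]$ by its conditional entropy, which is at most $b\ln 2$.
\item Hence $\E[\ln(1/\mu_D(C(T_t)))]\leq B_t\ln2$, no matter how much the verifier sends.
\item The Aaronson--Kuperberg $p$-uniform lemma, applied conditionally on each $h$, bounds the expected overlap of every query register, and of the output, by $(1+B\ln2)/D$.
\item The hybrid argument with concavity of the square root then bounds the fidelity by $(1+B\ln2)/D+2q\sqrt{(1+B\ln2)/D}$, which gives $(q+1)^2(B+1)=\Omega(2^n)$.
\end{itemize}
The expectation form is essential: transcripts with tiny $\mu_D(C(h))$ are correspondingly unlikely, and a union bound over transcripts cannot exploit this.
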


Fidelity means $\langle\psi|\rho|\psi\rangle$ for the verifier's output
$\rho$, averaged over the hidden state and the protocol's randomness. For polynomial prover-to-verifier communication, the theorem gives an
exponential query requirement. For $2n\leq B=o(2^n)$, it matches the
one-message upper bound of Aaronson and
Kuperberg~\cite{AaronsonKuperberg2007}.
Classical interaction therefore does not aid in performing unstructured search for a hidden Haar random state.

The Aaronson--Kuperberg separation actually separates $\textsf{UniqueQMA}$ from $\QCMA$, since the $\QMA$ language built in their oracle model has one unique witness up to global phase. The class $\textsf{UniqueQMA}$ restricts $\textsf{QMA}$ to
yes-instances with a one-dimensional accepting witness subspace:
a designated witness is accepted with high probability, while
every orthogonal witness is accepted with low
probability~\cite{JainKerenidisKuperbergSanthaSattathZhang2012}. 

Our oracle construction generalizes the Aaronson--Kuperberg separation by extending the $\QCMA$ to the class $\QCIP$, which is the complexity class with a quantum polynomial-time verifier and a computationally--unbounded prover that can interact with each other through classical messages. In particular, as a direct consequence of Theorem \ref{thm:intro-main} we obtain the following result.
\begin{theorem}
\label{thm:implication-unique}
There is a quantum oracle $\mathcal U$ such that
\[
    \textsf{UniqueQMA}^{\mathcal U}
    \not\subseteq
    \textsf{QCIP}^{\mathcal U}.
\]
\end{theorem}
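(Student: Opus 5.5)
We follow the Aaronson--Kuperberg diagonalization template, with Theorem~\ref{thm:intro-main} supplying the hardness step in place of their single-advice bound. For every $n$ we flip a fair coin $b_n$. If $b_n=1$ we choose a Haar-random (or, for a countable construction, a sufficiently fine net-random) $n$-qubit state $\ket{\psi_n}$ and set $\mathcal U_n=R_{\psi_n}$. If $b_n=0$ we set $\mathcal U_n=\id$. The unary language is $L=\{1^n : \mathcal U_n\neq \id\}$.

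\textbf{Step 1: $L\in\textsf{UniqueQMA}^{\mathcal U}$.} The verifier receives a witness $\ket{\phi}$ and runs a controlled-$\mathcal U_n$ phase test: it prepares $\ket{+}$ on a control qubit, applies controlled-$\mathcal U_n$ to $\ket{\phi}$, and measures the control in the $\ket{\pm}$ basis, accepting on $\ket{-}$. This accepts with probability $\bra{\phi}\,\tfrac{\id-\mathcal U_n}{2}\,\ket{\phi}$. When $\mathcal U_n=R_{\psi_n}$, that quantity equals $|\langle\phi|\psi_n\rangle|^2$: the witness $\ket{\psi_n}$ is accepted with probability $1$, and every witness orthogonal to it is accepted with probability $0$, which is exactly the one-dimensional accepting subspace required by \textsf{UniqueQMA}. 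When $\mathcal U_n=\id$, every witness is rejected.

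\textbf{Step 2: $L\notin\QCIP^{\mathcal U}$ with probability $1$ over the oracle.} Fix a $\QCIP$ protocol $(V,P)$ with polynomial bounds $q(n)$ on queries and $B(n)$ on communication. We first reduce deciding $L$ to state search. Suppose $V$, interacting with some prover, accepts with probability $\ge 2/3$ when $\mathcal U_n=R_\psi$ for a large fraction of $\psi$, and accepts with probability $\le 1/3$ when $\mathcal U_n=\id$, for every prover. We want a searcher that outputs a state with constant fidelity to $\psi$. The intended route is the hybrid argument of Bennett--Bernstein--Brassard--Vazirani, adapted to the interaction. Run the honest interaction with $\mathcal U_n=R_\psi$, and let the prover for the $\id$ world replay the same message strategy conditioned on the transcript so far. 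The total query magnitude on $\ket{\psi}$ must then be $\Omega(1)$. Pick a uniformly random query $t\le q$, halt there, and output the query register. Its expected fidelity with $\psi$ is $\Omega(1/q)$, which is not yet constant.

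Constant fidelity comes from amplification: repeat this $O(q)$ times with fresh interaction and use the reflection oracle to perform a fixed-point or Grover-type projection onto $\psi$. This costs $\mathrm{poly}(n)$ queries and communication. Alternatively, we would rely on a version of Theorem~\ref{thm:intro-main} whose fidelity threshold scales as $1/\mathrm{poly}$; the formal statement's bound should degrade only polynomially in the fidelity. Either way, the resulting polynomial $q'$ and $B'$ violate $(q'+1)^2(B'+1)=\Omega(2^n)$ for large $n$.

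\textbf{Main obstacle.} The delicate point is the prover quantifier in the $\id$ world. Soundness must hold for all provers, so we only need to exhibit one cheating prover. We choose one that simulates the honest prover for a freshly sampled $\psi'$, and check that the hybrid argument goes through with the verifier's transcript-dependent quantum memory. Once the reduction works for a fixed $n$, we conclude with the standard step: there are countably many protocols, the per-$n$ failure probability is a constant bounded away from zero, and the $n$'s can be chosen sparse and independent. Borel--Cantelli then shows that each protocol fails on infinitely many $n$ almost surely. A countable union over protocols yields an oracle with $L\in\textsf{UniqueQMA}^{\mathcal U}\setminus\QCIP^{\mathcal U}$.
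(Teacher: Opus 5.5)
Your argument is sound in outline, but it takes a detour that the paper avoids. The paper's formal engine, Theorem~\ref{thm:value}, already bounds the \emph{distinguishing} gap, $\E_\Psi[\max_P|\Acc_{R_\Psi}(V,P)-\Acc_{\id}(V,P)|]\le 2q\sqrt{(1+B\ln 2)/D}$. This is exactly the decision problem that defines the language. Theorem~\ref{thm:all-constants} then diagonalizes directly at a fresh length where this bound is below $1/3$. Either $\val_{\id}(V_i)>1/3$, and the component is set to $\id$ so soundness fails. Or $\E_\Psi[\val_{R_\Psi}(V_i)]<2/3$, so some $\psi$ has value below $2/3$, and setting the component to $R_\psi$ breaks completeness.

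You instead reduce decision to search, using a BBBV hybrid with one fixed prover $P_\psi$, truncation at a random query, and amplification. You then invoke the fidelity form of Theorem~\ref{thm:intro-main}. That is legitimate, since the paper presents the statement as a consequence of that theorem, but it is a round trip. The fidelity bound itself comes from the distinguishing bound by appending one Hadamard-test query, whose acceptance probability is $\bra{\psi}\rho\ket{\psi}$ under $R_\psi$ and zero under $\id$. The hybrid you redo is the first half of the proof of Theorem~\ref{thm:value}.

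What your version buys is a random-oracle statement. By the length-respecting convention, the failure events at different lengths are independent, so every verifier fails almost surely; the paper instead builds one oracle stage by stage. To get a constant failure probability you must quantify ``large fraction.'' The contradiction should show that for large $n$, $\val_{\id}\le1/3$ forces $\mu_D\{\psi:\val_{R_\psi}\ge2/3\}<1/2$. That gives failure probability at least $1/4$ at every large length.

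Three slips in the reduction should be fixed.

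\begin{enumerate}
\item From $\sum_t 2\sqrt{u_t}\ge1/3$, Cauchy--Schwarz gives only $\frac1q\sum_t u_t\ge 1/(36q^2)$, not $\Omega(1/q)$.
\item Grover or fixed-point amplification is unavailable. It must run the preparation coherently and in reverse, and your preparation contains measured classical messages and the prover. Use measure-and-restart instead: one controlled query implements the projective measurement $\{\proj{\psi},\id-\proj{\psi}\}$, and $O(q^2)$ fresh reruns give constant success with $q'=O(q^3)$ and $B'=O(q^2B)$. Alternatively, skip amplification entirely, since Theorem~\ref{thm:value} with $q+1$ queries bounds the expected fidelity by $2(q+1)\sqrt{(1+B\ln2)/D}$.
\item Your ``main obstacle'' is not one, and the remedy you suggest would break the hybrid. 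Soundness quantifies over all provers, so $P_\psi$ itself is an admissible cheating prover against $\id$. The hybrid needs the \emph{same} prover in both worlds, whereas a prover simulating a fresh $\psi'$ changes the channel between queries. Transcript-dependent verifier memory is harmless: between consecutive queries, the verifier's operations, the measurement of outgoing messages, and $P_\psi$'s reply form one fixed channel common to both worlds.
\end{enumerate}

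The genuinely hard part, that the transcript leaks information about $\psi$, sits inside Theorem~\ref{thm:intro-main}, via Lemma~\ref{lem:transcript-info}.
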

 The class $\QCIP$ was introduced by Buhrman, Le Gall, and
Weggemans~\cite{BuhrmanLeGallWeggemans2024} and  was further studied by
Grewal and Kretschmer~\cite{GrewalKretschmer2025}. Notice that $\QCIP[1]=\QCMA$\footnote{For an integer $k$, we let $\QCIP[k]$ denote the $\QCIP$ class restricted to $k$ classical messages. }. This separation  does not assert an
unrelativized separation, since polynomial-round $\textsf{QCIP}$
contains $\textsf{QMA}$ in the plain model~\cite{BuhrmanLeGallWeggemans2024}. {It is also consistent with computationally sound classical
verification of quantum computations~\cite{Mahadev2022}, whose soundness under the learning-with-errors assumption applies only against efficient cheating provers.}

In concurrent and independent work, \c{C}akan~\cite{Cakan2026} separates
$\textsf{QMA}$ from polynomial-round $\textsf{QCIP}$ using a different approach than our own. \c{C}akan's separation has the advantage of being relative to a classical oracle which is a more restricted model. In comparison, our approach provides a simpler and perhaps more intuitive separation. Moreover, our result separates \textsf{UniqueQMA} rather than $\QMA$, and thus provides a more fine-tuned separation. \c{C}akan's  separation does not provide a unique witness to the constructed $\QMA$ language so it does not separate \textsf{UniqueQMA} from $\QCIP$.  Note that uniqueness is a substantive restriction: there exists black-box separations between $\textsf{QMA}$ and
$\textsf{UniqueQMA}$~\cite{AnshuHaferkampHwangNguyen2026}.

\paragraph{An average-case gap problem.}
Next, we use our lower bound to build a primitive called a $\QMA$--$\QCIP[2]$ gap problem, introduced by Barhoush, Morimae, and
Movassagh~\cite{BarhoushMorimaeMovassagh2026}. {Assuming this gap problem, they present a restricted black-box separation between witness encryption with classical ciphertexts and any falsifiable assumption.} This class of assumptions essentially includes any assumption where the challenger can efficiently check whether the adversary has won the game and includes one-way functions and LWE. Their barrier also implies a separation between  classically encoded quantum indistinguishability obfuscation and falsifiable assumptions. {We provide an oracle construction} of this gap problem which supports these separations. 

\begin{theorem}[Oracle gap problem, informal]
\label{thm:intro-gap}
There exist a quantum unitary oracle $\mathcal T$ and a language
$G\in\textsf{QMA}^{\mathcal T}$ with yes- and no-instances efficiently
samplable relative to $\mathcal T$. Nevertheless, the classical yes- and no-instance
distributions remain computationally indistinguishable to quantum polynomial-time
algorithms with access to $\mathcal T$ {and classical access to a \mbox{$\textsf{QCIP}[2]^{\mathcal T}$} decision oracle.}
\end{theorem}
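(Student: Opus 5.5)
The plan is to build $\mathcal T$ as a family of reflection oracles about Haar-random hidden states, indexed by classical strings, and to let the yes/no distinction be whether a given index carries a genuine hidden state or the identity. Concretely, for each $n$ and each index $x\in\{0,1\}^n$, sample a Haar-random $n$-qubit state $\ket{\psi_x}$ and a uniformly random bit $b_x$. The oracle $\mathcal T$ exposes $R_{\psi_x}$ when $b_x=1$ and $\id$ when $b_x=0$. It also exposes a sampling interface: a random-oracle-like classical permutation from which a QPT algorithm can draw uniformly random yes-indices and no-indices. A simple way is to let sampling query $s$ return $\pi(s,\beta)$ with $b_{\pi(s,\beta)}=\beta$. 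The language $G$ is the set of $x$ with $b_x=1$. Membership in $\QMA^{\mathcal T}$ is the standard Aaronson--Kuperberg check. The witness is $\ket{\psi_x}$, and the verifier applies the controlled oracle in a Hadamard test, which flips the phase with certainty on a yes-instance and never on a no-instance. Since the accepting subspace is spanned by $\ket{\psi_x}$, this even gives $G\in\textsf{UniqueQMA}^{\mathcal T}$.

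The core step is the indistinguishability claim. Suppose a QPT adversary $A$ with classical access to a $\QCIP[2]^{\mathcal T}$ oracle distinguishes a sampled yes-instance $x^\ast$ from a sampled no-instance with advantage $\epsilon$. I would first replace each $\QCIP[2]$ oracle call by an actual polynomial-size two-message protocol: an efficient verifier interacting with an unbounded prover, where the prover sends $B=\mathrm{poly}(n)$ bits in total. A $\QCIP$ decision oracle is defined via the optimal prover, so for each call the adversary together with an unbounded prover simulates the answer. This gives a composite interactive procedure with $q=\mathrm{poly}(n)$ total queries to $\mathcal T$ and $B=\mathrm{poly}(n)$ total prover bits. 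For that procedure I would run a hybrid argument. I resample $\ket{\psi_{x^\ast}}$ freshly, and in the hybrid I swap the $x^\ast$-block of $\mathcal T$ between $R_{\psi_{x^\ast}}$ and $\id$. A distinguisher in this swap yields a procedure that detects $R_\psi$ versus $\id$. By the standard Aaronson--Kuperberg reduction, such a detector gives a procedure that outputs a state with fidelity noticeably above $2^{-n}$. The reduction measures the distinguishing query and applies amplitude estimation, and I would inline the polynomial loss. Theorem~\ref{thm:intro-main} then forces $(q+1)^2(B+1)=\Omega(2^n)$, which contradicts polynomial $q,B$. The remaining randomness of the other indices can be sampled by the prover and verifier jointly as a fixed oracle, and this is allowed because the lower bound is over Haar $\psi$ with everything else arbitrary.

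The main obstacle is that the $\QCIP[2]$ oracle is consistent with the \emph{true} $\mathcal T$. Its answers on queries that involve $x^\ast$ itself are therefore exactly what leaks $b_{x^\ast}$. For example, the query ``is $x^\ast\in G$'' is decidable in $\QMA$ but must not be in $\QCIP[2]$. So the decision oracle's answer on such inputs must be shown to be essentially independent of $b_{x^\ast}$. My first attempt is to argue query-by-query: for each oracle call, the gap between the optimal acceptance probabilities under $R_{\psi}$ and under $\id$ at index $x^\ast$ is either small, or else the optimal prover strategy witnesses a fidelity-finding protocol. The latter is excluded by the main theorem with the $(q+1)^2(B+1)$ bound. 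Here the Haar-average statement in Theorem~\ref{thm:intro-main} is essential, since the optimal prover may depend on $\psi$.

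To handle promise-gap instances, where the decision oracle's answer is arbitrary, I would use a union bound over polynomially many calls with a Markov argument on $\psi$. This shows that with high probability over $\mathcal T$, no queried instance lies near threshold in a way correlated with $b_{x^\ast}$. Finally, I would fix the oracle via Borel--Cantelli over $n$ to obtain a single $\mathcal T$.
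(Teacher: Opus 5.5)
Your proposal has two genuine gaps.

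\textbf{The classical sampler is broken by the decision oracle.} Take the query $z$ whose verifier ignores its own challenge, receives a string $s$ from the prover, and accepts iff $\pi(s,1)=x^\ast$. This is a $\QCMA$-type protocol, hence a legal $\QCIP[2]$ query. Its value is $1$ on every sampled yes-instance and $0$ on every no-instance, so every completion reveals membership after one query. The same attack works for any yes-sampler given as a classical function, because a preimage is a classical membership certificate. In particular, $\pi$ is not a background independent of the hidden block, and no hybrid can move $x^\ast$ out of the image of $\pi(\cdot,1)$ undetectably.

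Your sampler also fails condition~1 of Definition~\ref{def:gap}, which requires outputting $x$ together with $\ket{w_x}^{\otimes t}$. The paper instead uses coherent samplers $S(Y_{n,t})$ and $S(N_n)$. These swap $\ket{0}$ with a uniform superposition over exponentially many labels, carrying $t$ witness copies. Adding or removing one label moves them by $O(K^{-1/2})$ in operator norm (Lemma~\ref{lem:sampler-perturbation}), so every protocol value moves negligibly. Because the yes sampler holds copies of $\psi_y$, the challenge label must be punctured from it \emph{before} the reflection at $y$ is removed. Your hybrid omits this step.

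\textbf{The decision oracle is not handled.} One run of a protocol against an optimal prover gives a coin of bias $\val$, not the thresholded bit. Theorem~\ref{thm:value} controls only $|\val_{R_\psi}-\val_{\id}|$, whereas a deterministic completion is discontinuous in $\val$, and the adversary chooses where values lie. For example, let $z_\theta$ accept with probability $\theta$ and otherwise run the controlled Hadamard test at $x^\ast$ on a uniformly random basis state. Its value is $\theta+(1-\theta)b_{x^\ast}2^{-n}$ for every $\psi$. Against the valid completion $\mathbf 1[\val\geq 1/2]$, one query with $\theta=1/2-2^{-n-2}$ returns $b_{x^\ast}$ with certainty.

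So the claim that, with high probability over $\cT$, no queried instance lies near the threshold is false, and no Markov argument over $\psi$ can repair it. Nor does a distinguisher with decision-oracle access yield a bounded-communication detector of $R_\psi$ versus $\id$ to which Theorem~\ref{thm:intro-main} could apply.

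The missing idea is to exploit the existential quantifier over completions. The paper fixes independent thresholds $\Theta_z$ uniform in $[2/5,3/5]$ and sets $\cO(z)=\mathbf 1[\val_{\cT}(z)\geq\Theta_z]$. A value shift $\delta$ then flips a given answer with probability at most $5\delta$. It couples the two executions until their first differing answer. It bounds $\E_{\psi_y}[\max_z|\val_{\cT}(z)-\val_{\cT'}(z)|]$ by applying Theorem~\ref{thm:value} to a meta-verifier whose prover first sends $z$; putting the maximum inside the expectation handles adaptively chosen $z$.

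Finally, Borel--Cantelli applied to the oracle-averaged advantage is not enough, because signed advantages can cancel across oracles. The paper first bounds $\E_{\cT,\Theta}[(p_Y-p_N)^2]$ by running $A$ twice, and only then fixes $(\cT,\Theta)$.
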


Barhoush, Morimae, Nishimaki, and
Yamakawa~\cite{BarhoushMorimaeNishimakiYamakawa2026} introduce a similar notion of
$\textsf{QMA}$--$\textsf{QCMA}$ gap assumption, which they build relative to a quantum oracle.
Assuming such a gap, they obtain a quantum black-box barrier for
non-interactive classical verification of quantum computation from falsifiable assumptions. We adapt their oracle construction to permit
$\textsf{QCIP}[2]$ decision access.

We will now discuss how our results relate to various areas in quantum cryptography.

\paragraph{State preparation with verification access.}
The task of transmitting a quantum state using classical interaction is known as remote state preparation. This area has well-established communication--entanglement
tradeoffs~\cite{BennettEtAl2001,BennettEtAl2005}, including lower bounds~\cite{KunduLalonde2026}. These models do not provide the verifier with a reflection oracle
for the state to be prepared, which allows the ability to verify a state. Our contribution is to include this
additional resource and quantify its tradeoff with classical
communication. Note that verification ability can help restore a state when a sufficiently large
subsystem is already available~\cite{FarhiEtAl2010}. The task of preparing state given the ability to verify is also closely related to publicly verifiable quantum
money~\cite{AaronsonChristiano2013,Aaronson2009}.

\paragraph{Comparison with quantum no-telegraphing and quantum fire.}
Nehoran and Zhandry~\cite{NehoranZhandry2024} establish quantum
no-telegraphing even when the sender is computationally unbounded:
a receiver with bounded oracle access cannot reconstruct the state
from a polynomial-length classical description. Their result concerns
one-way communication, whereas our lower bound permits arbitrary
classical interaction.

\c{C}akan, Goyal, and Shmueli~\cite{CakanGoyalShmueli2026} establish
interactive untelegraphability relative to a classical oracle.
Their model allows unbounded computation between oracle queries
and even unrestricted classical communication, but bounds the
oracle queries of \emph{both} the sender and the receiver.
This restriction on the sender is essential when comparing their
result with ours: their security guarantee does not cover classical
messages obtained through unrestricted oracle computation. In our setting, the prover has complete
knowledge of the hidden state \( |\psi\rangle \) and may perform
arbitrary computation and oracle queries. We bound only the total
prover-to-verifier communication and the verifier's oracle queries.
Thus, our lower bound does not rely on the prover's inability to
learn the state or obtain a useful description of it. 

Casper, Nehoran, and Sattath~\cite{CasperNehoranSattath2026}
construct conversable quantum fire in the standard model under
cryptographic assumptions. Their states resist one-way classical
transfer by quantum polynomial-time parties but admit transfer
through classical interaction.


\subsection{Acknowledgments}
The concepts and proof ideas were mainly generated by the authors. We used ChatGPT 6 Astra to help write the proofs, improve presentation and check for mistakes. The authors take responsibility for any mistakes or issues.

\subsection{Technical overview}

We now give a simplified description of the proof of our separation.

The unitary oracle and $\textsf{UniqueQMA}$ language we construct is the same as the Aaronson--Kuperberg separation. Their oracle hides an unknown Haar random \(n\)-qubit pure state \(\ket{\psi}\) through the reflection
\[
        R_\psi=\id-2\proj{\psi}.
\]
The promise is that the oracle is either \(R_\psi\), for some hidden state
\(\ket{\psi}\), or the identity.  A \textsf{UniqueQMA} prover can send
\(\ket{\psi}\) itself, after which one controlled oracle query distinguishes
the two cases perfectly.  Aaronson and Kuperberg proved that any verifier that
distinguishes \(R_\psi\) from \(\id\) with bounded error on average for a Haar random \(n\)-qubit
state \(\psi\), even when given an \(m\)-bit classical witness chosen as an
arbitrary function of \(\psi\), must make
\(\Omega(\sqrt{2^n/(m+1)})\) queries. 

We need to generalize this argument to $\QCIP$ provers. We first show that the prover in a $\QCIP$ protocol can be made deterministic. In particular, for each hidden state $\psi$, we fix one optimal prover that may depend arbitrarily on the hidden state. The idea is that given any transcript of messages and the state $\psi$, there must exist a response by the prover that maximizes the probability of the verifier accepting. We fix such a maximizing response for each transcript to obtain a deterministic prover. 

\paragraph{The transcript-information lemma.}
{Set $D=2^n$, and let $\mu_D$ be the Haar probability measure on
pure states in $\mathbb C^D$. For each $\psi$, let $P_\psi$ be the
deterministic prover fixed above. A public transcript $h$ is the ordered list of
classical messages exchanged so far. Let $C(h)$ be the set of hidden states whose provers send all
the replies recorded in $h$, when given the verifier messages recorded
there. For a positive-probability transcript $h$, define}
\[
                     L(h):=\ln\frac1{\mu_D(C(h))}.
\]
{The quantity $\mu_D(C(h))$ is the probability
that a Haar-random state belongs to $C(h)$. We now consider the experiment in which the {verifier's access to the reflection oracle} is replaced with access to an
identity oracle. In this case, the verifier only has knowledge of the state $\psi$ through the messages in the transcript $h$. In other words, the probability of generating the verifier
messages in $h$ is the same for every state in $C(h)$. Thus, conditioned
on $h$, the hidden state has Haar measure restricted to $C(h)$, and the
verifier's conditional private state depends only on $h$.}
Thus \(L(h)\) measures how much the transcript has narrowed the possible
hidden states relative to the verifier.  Initially,
\[
 C(\varnothing)=\mathbb{CP}^{D-1}
 \qquad\text{and}\qquad
 L(\varnothing)=0.
\]

A verifier message does not change this quantity.  Indeed, conditioned on the
earlier transcript \(h\), the verifier's state is independent of the particular
hidden state in \(C(h)\).  Therefore, if the verifier next sends \(a\), then
\[
                         C(ha)=C(h)
                         \quad\text{and}\quad
                         L(ha)=L(h).
\]

Now suppose that the prover sends a reply \(z\).  Since \(P_\psi\) is
deterministic, this reply restricts the compatible set to
\[
                  C(hz)=\{\psi\in C(h):P_\psi(h)=z\}.
\]
The sets \(C(hz)\), as \(z\) ranges over the possible replies, partition
\(C(h)\).  Define
\[
                  p_z:=\frac{\mu_D(C(hz))}{\mu_D(C(h))}.
\]
Note that \(p_z\) is exactly the conditional probability that the
prover sends \(z\), given the earlier transcript \(h\).

If reply \(z\) is observed, the increase in \(L\) is
\[
 \begin{aligned}
 L(hz)-L(h)
 &=\ln\frac1{\mu_D(C(hz))}
   -\ln\frac1{\mu_D(C(h))} \\
 &=\ln\frac{\mu_D(C(h))}{\mu_D(C(hz))}
 =\ln\frac1{p_z}.
 \end{aligned}
\]
Therefore, averaging over the possible replies, the expected
increase is
\[
 \E[L(hZ)-L(h)\mid h]
 =\sum_z p_z\ln\frac1{p_z}.
\]
Here the expectation is over the distribution of $\psi$. The right-hand side is the entropy of the prover's reply, measured using
natural logarithms.  If the reply contains \(b\) bits, it has at most \(2^b\)
possible values, and hence
\[
                  \sum_z p_z\ln\frac1{p_z}
                  \leq\ln(2^b)=b\ln2.
\]
Thus, a \(b\)-bit prover message can increase the expected value of \(L\) by
at most \(b\ln2\), while verifier messages do not increase it at all.

Finally, let \(B_t\) be the total number of prover bits sent before query
\(t\) and let $T_t$ denote the public transcript immediately
before that query.  Summing the preceding bound over all prover messages before that
query gives
\[
 \E\!\left[L(T_t)\right]
 =
 \E\!\left[\ln\frac1{\mu_D(C(T_t))}\right]
 \leq B_t\ln2.
 \tag{II}
\]
This expectation correctly weights each transcript by its probability:
transcripts with very small compatible sets may have large \(L(h)\), but
they are correspondingly unlikely to occur.

\paragraph{From information to query complexity.}
The bound on $\mathbb{E}[L]$ allows us to bound the information the verifier has before each oracle query. In particular, it allows us to show that before any query, the query
register has expected overlap at most $O((B+1)/D)$ with the
hidden state. Replacing that identity query by a reflection
changes the state by at most twice the square root of this
overlap in trace distance i.e. $O(\sqrt{(B+1)/D})$.
We now replace all $q$ queries one at a time until we return to the standard experiment where all queries by the verifier are answered with the reflection oracle. The triangle inequality gives a
total expected change of at most
\[
    O\!\left(q\sqrt{\frac{B+1}{D}}\right).
\]
This is where $q$ enters: each query contributes to the total
change. Constant distinguishing advantage consequently requires $(q+1)^2(B+1)=\Omega(D)$.

\paragraph{The $\QMA$--$\QCIP[2]$ gap problem.}
For the gap problem, we follow a similar construction to that of Barhoush,
Morimae, Nishimaki, and Yamakawa \cite{BarhoushMorimaeNishimakiYamakawa2026}.

At length \(n\), choose a random sparse set \(L_n\) of yes-labels and
an independent Haar-random state \(\ket{\psi_x}\) for each label \(x\).
The construction has three oracle components:
\begin{itemize}
    \item \textbf{Testing oracle.}
    Given a label \(x\) and a quantum state, this oracle applies the
    reflection about \(\ket{\psi_x}\) if \(x\in L_n\), and the identity
    otherwise. Thus, a verifier given \(\ket{\psi_x}\) as a quantum
    witness can test whether \(x\) is a yes-instance using one
    controlled query.

    \item \textbf{Yes-sampling oracle.}
    Starting from a blank state, this oracle prepares a uniform
    superposition of yes-labels, each accompanied by a requested
    number of copies of its corresponding witness.
    Measuring the label register gives a uniformly random
    \(x\in L_n\), together with copies of \(\ket{\psi_x}\).

    \item \textbf{No-sampling oracle.}
    Starting from a blank state, this oracle prepares a uniform
    superposition of labels outside \(L_n\).
    Measuring the label register gives a uniformly random no-instance.
\end{itemize}

To demonstrate the existence of a $\QMA$--$\QCIP$ gap problem relative to these oracles, we need to establish that no QPT adversary can distinguish a randomly sampled yes-instance from a randomly sampled no-instance, even given access to these oracles as well as a $\QCIP[2]$ oracle. 

Suppose an adversary receives an instance $y\in L$. Since the part of the oracle generating yes instances only generates them in a uniform superposition, it is easy to carry out a  hybrid argument where $y$ is deleted from the yes-sampling oracle and replaced into the no-sampling oracle. These changes only have negligible difference on the output of the oracles given that the state generated are in uniform superposition over exponentially many elements.

However, the adversary still has access to a $\QCIP$ oracle, which seems stronger than a $\QCIP$ verifier: the adversary can adaptively query the oracle to decide $\QCIP$ instances. Nevertheless, a change of perspective makes this situation less daunting. The adversary's interaction with the oracle resembles that of a $\QCIP$ verifier with a prover, since both involve classical interaction. By utilizing this idea, we show that our lower-bound arguments for $\QCIP$ verifiers also apply to the adversary, establishing that it cannot find the hidden state associated with $y$. This completes the indistinguishability argument for the $\QMA$--$\QCIP$ gap problem.

Finally, a short second-moment
argument followed by Borel--Cantelli fixes one base oracle that work for all efficient distinguishers.  The formal construction
and proof appear in Section~\ref{sec:gap}.

\section{Model and Preliminaries}
\label{sec:model}

All logarithms written as \(\ln\) are natural logarithms.  For density
operators \(\rho\) and \(\sigma\), let
\[
   \mathrm D(\rho,\sigma)=\frac12\norm{\rho-\sigma}_1
\]
denote trace distance.  Trace distance is contractive under quantum channels
and bounds the difference between the probabilities of every measurement
outcome.

Let \(\mu_D\) be Haar measure on the pure states in \(\mathbb C^D\), or
equivalently on \(\mathbb{CP}^{D-1}\).  We write
\(\Psi\leftarrow\mu_D\) for a Haar-random pure state.

A \(\QPT\) algorithm means a uniform
polynomial-time quantum oracle algorithm.  

\subsection{The marked-state reflection oracle}

For every unit vector \(\ket{\psi}\in\mathbb C^D\), define
\[
   R_\psi=\id-2\proj{\psi}.
\]
The two cases of the promise problem are
\[
  \textsf{marked}: R_\psi\text{ for some }\ket{\psi},
  \qquad
  \textsf{unmarked}: \id.
\]
As in Aaronson and Kuperberg~\cite{AaronsonKuperberg2007}, controlled oracle
calls are allowed.  This only strengthens the verifier.  

\subsection{Classical-message quantum interactive proofs}

We make the communication model explicit.  The verifier and prover start in
a fixed product state; in particular, they have no shared classical or
quantum setup and no shared entanglement.  They exchange only classical
strings.  Formally, every outgoing message register is measured, or
equivalently completely dephased, in the computational basis before it is
transmitted.  The verifier and prover may retain arbitrary private quantum
registers.  A \emph{public transcript} is the round-indexed list of all
transmitted classical messages.  It does not include either party's private
measurement outcomes.

Messages alternate, and the prover sends the last counted message.  More
explicitly, for \(r=2p-1\) the order is
\[
  Z_1\longrightarrow A_1\longrightarrow Z_2
  \longrightarrow\cdots\longrightarrow A_{p-1}\longrightarrow Z_p,
\]
while for \(r=2p\) it is
\[
  A_1\longrightarrow Z_1\longrightarrow A_2
  \longrightarrow\cdots\longrightarrow A_p\longrightarrow Z_p.
\]
Here \(A_i\) is sent by the verifier and \(Z_i\) by the prover.  Thus
\(\QCIP[1]=\QCMA\), \(\QCIP[2]\) is verifier-first, and
\(\QCIP[3]\) has order \(P\to V\to P\).  Padding by ignored messages makes
equivalent conventions harmless. Following Buhrman, Le Gall, and
Weggemans, we use \(\QCIP\) without a bracketed parameter for the union of
polynomial-time protocols having at most \(n^d\) classical messages, over all
fixed integers \(d\geq1\).  Their total communication is also polynomial.

On a fixed input, a \emph{deterministic classical prover} simply specifies
one classical reply after every possible public transcript.  We write this reply as \(P(h)\), where \(h\) is the
transcript.  No efficiency or uniformity condition is imposed on \(P\).
When the oracle is \(W\), let \(\Acc_W(V,P)\) be the verifier's acceptance
probability when interacting with \(P\), and define
\[
   \val_W(V)=\max_P \Acc_W(V,P),
\]
where the maximum is over deterministic classical provers.

\begin{lemma}[Deterministic classical provers suffice]
\label{lem:deterministic}
Fix an oracle \(W\) and a protocol with classical communication and no
shared initial state.  The maximum acceptance probability over all quantum
provers is attained by a deterministic classical prover.
\end{lemma}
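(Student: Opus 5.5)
The plan is to prove the lemma by backward induction over the rounds of the protocol, using the fact that every transmitted message is dephased in the computational basis. Fix the oracle \(W\), the verifier \(V\), and an arbitrary quantum prover \(P^*\) with private registers and unlimited oracle access. Because outgoing registers are measured before transmission, the joint state at any point can be written as a classical mixture over public transcripts \(h\). Conditioned on \(h\), the verifier's private register is in some state \(\rho_V(h)\) and the prover's private register is in some state \(\sigma_P(h)\), with weight \(\Pr[h]\). The first step is to show that these two conditional states are in product form: \(\rho_{VP}(h)=\rho_V(h)\otimes\sigma_P(h)\). We prove this by induction on rounds. The initial state is a product state with no shared entanglement, and each round acts locally on one party plus the message register. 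When that party's operation is followed by a computational-basis measurement of the outgoing message, conditioning on the outcome again leaves a product state. The reason is that a local instrument acting on a product state, conditioned on its classical outcome, keeps the other party's marginal unchanged up to the conditioning.

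Next, we note that the verifier's conditional state \(\rho_V(h)\), and therefore everything the verifier does next, depends only on \(h\) and on the prover's \emph{distribution} of replies. It does not depend on \(\sigma_P(h)\) in any other way. Concretely, when the prover sends reply \(z\) after transcript \(h\), the verifier's state simply continues from \(\rho_V(h)\) with classical input \(z\). So the final acceptance probability has the form
\[
\sum_{h}\Pr[h]\cdot\alpha(h),
\]
where \(\alpha\) is computed from the verifier alone. At each prover turn, the prover's action only determines a probability distribution \(q_h(z)\) over replies, which may depend on its private state.

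The core step is a backward induction. Define the value \(v(h)\) of a partial transcript \(h\) as the maximum acceptance probability achievable from \(h\) onward, given the verifier's state \(\rho_V(h)\).
\begin{itemize}
\item At a verifier turn, \(v(h)=\sum_a \Pr[a\mid h]\,v(ha)\).
\item At a prover turn, any prover's contribution is \(\sum_z q_h(z)\,v(hz)\), which is at most \(\max_z v(hz)\).
\end{itemize}
The deterministic prover \(P(h)=\arg\max_z v(hz)\), with ties broken arbitrarily, therefore achieves \(v(\varnothing)\). By induction on the number of remaining rounds, every quantum prover achieves at most \(v(\varnothing)\). Hence \(\max_P \Acc_W(V,P)\) over deterministic provers equals the supremum over quantum provers and is attained.

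The main obstacle is making the inductive claim precise for quantum provers. A general prover's reply distributions at later rounds may be correlated with its earlier replies through its private memory. The remedy is to condition on the full public transcript \(h\), so that everything the prover's memory contributes collapses into the conditional distribution \(q_h(\cdot)\). The product-form argument is exactly what ensures that the verifier's future behaviour is independent of the prover's memory once \(h\) is fixed.

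A secondary, finiteness point is that the maximum exists. Message lengths are bounded, so there are finitely many transcripts and the \(\arg\max\) is over a finite set. This is also why the supremum over quantum provers is attained rather than merely approached.
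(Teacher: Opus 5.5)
Your proposal is correct and follows essentially the paper's argument: the parties' states are in product form conditioned on the public transcript, so the verifier's future behaviour depends only on that transcript, and backward induction over prover turns then replaces each reply distribution by a best fixed reply. The only cosmetic difference is that you compute a dynamic-programming value $v(h)$ that upper-bounds every quantum prover and is attained by the argmax prover. The paper instead improves a given quantum prover turn by turn and then uses the finiteness of the set of deterministic provers to obtain attainment.
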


\begin{proof}
Fix any quantum prover.  Conditioned on a public transcript, the parties'
private states are a product
state.  Indeed, they start with no shared state; local computations preserve
this property; and conditioning on a transmitted classical message changes
only the sender's private state.  Thus, once the transcript is fixed, the
verifier's conditional state does not depend on the prover's private quantum
memory.

We now fix the prover's replies, starting with its last message and working
backwards.  At a transcript where the prover sends its last message, there is a probability distribution over the possible
classical replies.  Its acceptance probability is therefore an average of
the acceptance probabilities obtained by fixing the reply.  Some fixed reply
is at least as good as this average, so choose such a reply for that
transcript.  At a transcript of probability zero, choose any reply.  Do this
for every transcript at the last prover turn.  Then
repeat the same argument at the preceding prover turn, with all later replies
already fixed, and continue backwards.  This produces a deterministic reply
for every public transcript, and the resulting deterministic classical prover
accepts at least as often as the original quantum prover.  Finally, the
protocol has finitely many turns and finite message registers, so there are
only finitely many deterministic classical provers.  One of them therefore
attains the largest acceptance probability.
\end{proof}

When \(W=R_\psi\), we allow the deterministic prover \(P\) to depend
arbitrarily on \(\psi\).  This already grants the prover everything it could
learn from unlimited oracle access.

 To define
oracle-query slots unambiguously, divide the verifier's computation
into phases separated by transmitted messages.  Within each phase, pad every
adaptive query path to the phase's maximum number of queries, using
oracle calls controlled on a zero flag.  Every execution then has the same
chronologically ordered list of query slots.  For slot \(t\), let \(B_t\)
be the sum of the padded lengths of the prover messages in earlier
communication slots.  This is a deterministic bound, independent of the
branch taken.  Let \(B\) be the total padded prover-to-verifier
communication in the protocol.  Private measurements may be deferred, while
the measurements that create transmitted classical messages are kept
explicit.

\subsection{Relativized model}

A quantum oracle is a sequence \(\cU=(U_1,U_2,\ldots)\), where \(U_n\) is a
unitary on a specified polynomial number of qubits.  We use the
length-respecting convention of Aaronson and Kuperberg: on input \(0^n\), a
machine queries only \(U_n\).  This makes the diagonal construction
transparent.

\subsection{Geometric Lemma}
\label{sec:haar}

We state an important geometric lemma used in
Aaronson--Kuperberg separation \cite{AaronsonKuperberg2007}.

\begin{definition}
A probability measure \(\nu\) on \(\mathbb{CP}^{D-1}\) is
\emph{\(p\)-uniform}, for \(0<p\leq1\), if
\(p\nu(G)\leq\mu_D(G)\) for every measurable set \(G\).
Equivalently, we write \(p\nu\leq\mu_D\) as an inequality of measures.
\end{definition}

\begin{lemma}[Haar overlap under conditioning (Lemma 3.2 in \cite{AaronsonKuperberg2007})]
\label{lem:puniform}
Let \(\nu\) be \(p\)-uniform and let \(\rho\) be a density operator on
\(\mathbb C^D\), where \(D\geq2\).  Then
\[
  \E_{\Psi\leftarrow\nu}
    [\bra{\Psi}\rho\ket{\Psi}]
  \leq \frac{1+\ln(1/p)}{D}.
\]
\end{lemma}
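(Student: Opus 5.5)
The plan is to use unitary invariance of Haar measure to reduce to a diagonal $\rho$, and then solve an extremal problem for the density of $\nu$. Write $f=d\nu/d\mu_D$. The condition $p\nu\leq\mu_D$ says exactly that $0\leq f\leq 1/p$ holds $\mu_D$-almost everywhere, together with $\int f\,d\mu_D=1$. The quantity to bound is
\[
  \E_{\Psi\leftarrow\nu}[\bra{\Psi}\rho\ket{\Psi}]=\int f(\psi)\bra{\psi}\rho\ket{\psi}\,d\mu_D(\psi).
\]
This expression is linear in $\rho$, and every density operator is a convex combination of pure projectors. Hence it suffices to treat $\rho=\proj{\phi}$ for a fixed unit vector $\phi$. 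By unitary invariance of $\mu_D$ we may take $\phi=\ket{0}$, after replacing $f$ by its pullback, which remains a valid $p$-uniform density. The integrand then becomes $g(\psi)=|\langle 0|\psi\rangle|^2$.

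Next, maximize $\int f g\,d\mu_D$ over all $f$ with $0\le f\le 1/p$ and total mass $1$. By the bathtub principle, the maximum is attained when $f=1/p$ on the superlevel set $\{g\geq s\}$, where $s$ is chosen so that this set has Haar measure exactly $p$. For Haar-random $\psi$, the overlap $g$ has the $\mathrm{Beta}(1,D-1)$ law, with $\Pr[g\geq s]=(1-s)^{D-1}$. Therefore $s=1-p^{1/(D-1)}$, and the bound becomes
\[
  \E_{\nu}[g]\leq \frac1p\int_s^1 x\,(D-1)(1-x)^{D-2}\,dx
  =\frac1p\Bigl(s(1-s)^{D-1}+\frac{(1-s)^D}{D}\Bigr)
  = s+\frac{1-s}{D}.
\]
The last equality uses $(1-s)^{D-1}=p$.

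Finally, bound $s+(1-s)/D$ by $(1+\ln(1/p))/D$. Writing $p^{1/(D-1)}=e^{-\lambda/(D-1)}$ with $\lambda=\ln(1/p)$, we have $s=1-e^{-\lambda/(D-1)}$, so
\[
  s+\frac{1-s}{D}=1-\frac{D-1}{D}\,e^{-\lambda/(D-1)}.
\]
It remains to show $(D-1)\bigl(1-e^{-\lambda/(D-1)}\bigr)\leq \lambda$, which is the inequality $1-e^{-x}\leq x$. Indeed,
\[
  1-\frac{D-1}{D}e^{-\lambda/(D-1)}
  =\frac1D+\frac{D-1}{D}\bigl(1-e^{-\lambda/(D-1)}\bigr)
  \leq\frac{1+\lambda}{D}.
\]
The main technical point is justifying the bathtub step and the Beta distribution of $g$. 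The Beta law follows from writing the Haar state as a normalized complex Gaussian vector, so that $g$ is a ratio $E_1/(E_1+\dots+E_D)$ of i.i.d. exponentials. The rearrangement step follows from a one-line pointwise inequality: $(f-f^*)(g-s)\leq 0$, where $f^*$ is the threshold density. Integrating this inequality and using $\int f=\int f^*$ gives $\int fg\leq\int f^*g$.
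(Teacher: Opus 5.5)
Your proof is correct, and it is essentially the standard Aaronson--Kuperberg argument that the paper imports by citation without reproducing. That argument reduces to a pure $\rho$ by convexity and unitary invariance, identifies the extremal $p$-uniform measure as Haar measure conditioned on the cap $\{|\langle\phi|\psi\rangle|^2\ge s\}$ of mass $p$, and integrates the $\mathrm{Beta}(1,D-1)$ tail $(1-x)^{D-1}$. Your evaluation $s+(1-s)/D$ with $s=1-p^{1/(D-1)}$, followed by $1-e^{-x}\le x$, recovers exactly the stated constant $(1+\ln(1/p))/D$.
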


\section{What a Classical Transcript Reveals}
\label{sec:transcript}

We first describe the experiment used throughout this section.  Fix a
verifier \(V\) with \(q\) padded query slots, fix \(t\in[q]\), and fix a
measurable choice \(\psi\mapsto P_\psi\), for
\(\psi\in\mathbb{CP}^{D-1}\), of deterministic classical provers.
The measurable choice needed for the lower bound is constructed explicitly
in Theorem~\ref{thm:value}.  Sample
\(\Psi\leftarrow\mu_D\), let the prover be \(P_\Psi\), and replace
every verifier oracle call before slot \(t\) by the identity.  The prover is
already represented by the fixed prover \(P_\Psi\), so it makes no
oracle calls inside this experiment.  We call this the
\emph{identity-prefix execution}.  Let \(T_t\) be its public classical
transcript immediately before query \(t\).  The randomness in \(T_t\)
comes from the Haar state and from the verifier's measurements and private
randomness.  When we say that the verifier's state is fixed by a transcript,
we mean that its conditional density operator is fixed; it may be mixed
because private outcomes are averaged.

For a possible transcript \(h\), define its \emph{compatible set}
\[
 C(h)=
 \{\psi\in\mathbb{CP}^{D-1}:
   \text{$P_\psi$ sends every prover message recorded in $h$}\}.
\]
Thus \(C(h)\) contains exactly the hidden states consistent with the
prover's part of \(h\).  Verifier messages still appear in \(h\), because
they determine which reply the prover sends.  Their own
probabilities impose no further restriction on \(\psi\) once the preceding
transcript is fixed.  Measurability of the choice
\(\psi\mapsto P_\psi\) makes every set \(C(h)\) Haar measurable.

\paragraph{Warm-up: three messages.}
Consider a transcript
\[
       Y\longrightarrow A\longrightarrow Z,
\]
where \(Y,Z\) are prover messages and \(A\) is the verifier's challenge.
Capital letters denote random messages, and lower-case letters denote their
realized values.
After observing \(Y=y\), the compatible states form \(C(y)\).  Once
\(y\) is fixed, the identity-prefix verifier's private state is fixed as
well.  Hence the distribution of \(A\) is the same for every
\(\psi\in C(y)\), and observing \(A=a\) does not shrink the set:
\[
       C(y,a)=C(y).
\]
The final prover message can shrink it:
\[
 C(y,a,z)=\{\psi\in C(y):P_\psi(y,a)=z\}.
\]
Thus \(Y\) and \(Z\) can reveal information about \(\Psi\), but \(A\)
cannot reveal new information once \(Y\) is known.  If query \(t\) occurs
before \(Z\), then only the bits in \(Y\) are charged.


\begin{lemma}[Query overlap after a classical transcript]
\label{lem:transcript-info}
\label{lem:query-overlap}
In the identity-prefix execution, let \(\rho_{t,h}\) be the state of the
oracle-input register before query \(t\), conditioned on transcript
\(T_t=h\). If at most \(B_t\) padded prover bits precede this query, then
\begin{equation}
 \E_{\Psi,T_t}
   [\bra{\Psi}\rho_{t,T_t}\ket{\Psi}]
 \leq \frac{1+B_t\ln2}{D}.
 \label{eq:query-overlap}
\end{equation}
\end{lemma}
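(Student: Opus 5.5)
The plan has three steps: identify the conditional law of \(\Psi\) given the transcript, apply Lemma~\ref{lem:puniform} transcript by transcript, and then average using an entropy bound on the prover's messages.

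\textbf{Step 1: conditional law of \(\Psi\).} I would show by induction over the messages preceding query \(t\) that two things hold conditioned on \(T_t=h\).
\begin{enumerate}
\item The verifier's private state, and hence \(\rho_{t,h}\), depends only on \(h\) and not on \(\Psi\).
\item The conditional law of \(\Psi\) is \(\mu_D\) restricted to \(C(h)\) and renormalized: \(\nu_h=\mu_D(\cdot\cap C(h))/\mu_D(C(h))\).
\end{enumerate}
Both points rely on the identity-prefix execution. Every oracle call is the identity, so the verifier's state evolves as a function of the received classical messages only.

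For the induction step, consider first a verifier message \(a\). Its probability given the history \(h\) is the same for every \(\psi\in C(h)\). Bayes' rule therefore leaves the posterior unchanged, and \(C(ha)=C(h)\).

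Now consider a prover message \(z\). Because \(P_\psi\) is deterministic, conditioning on \(z\) multiplies the posterior by the indicator of \(\{\psi:P_\psi(h)=z\}\). This yields exactly the restriction to \(C(hz)\).

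Measurability of \(\psi\mapsto P_\psi\) ensures every \(C(h)\) is measurable. Transcripts of probability zero can be ignored.

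\textbf{Step 2: apply the geometric lemma.} By construction \(\nu_h\) is \(p_h\)-uniform with \(p_h=\mu_D(C(h))\). Since \(\rho_{t,h}\) is a fixed density operator, Lemma~\ref{lem:puniform} gives
\[
\E_{\Psi\sim\nu_h}[\bra{\Psi}\rho_{t,h}\ket{\Psi}]\le\frac{1+L(h)}{D},
\qquad L(h)=\ln\frac1{\mu_D(C(h))}.
\]
Averaging over \(T_t\), it then suffices to prove \(\E[L(T_t)]\le B_t\ln 2\).

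\textbf{Step 3: entropy bound.} This is the main obstacle, because the probability of a transcript includes verifier randomness and not just Haar measure. I would handle this with the chain rule, as in the technical overview. Write \(\E[L(T_t)]\) as a telescoping sum of increments over the messages.
\begin{itemize}
\item A verifier message contributes zero increment, by Step 1.
\item A prover message \(z\) sent after history \(h\) has conditional probability \(p_z=\mu_D(C(hz))/\mu_D(C(h))\), again by Step 1.
\end{itemize}
The expected increment at a prover message is therefore the Shannon entropy \(\sum_z p_z\ln(1/p_z)\). A padded message of length \(b\) has at most \(2^b\) possible values, so this entropy is at most \(b\ln 2\).

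Summing over the prover messages preceding slot \(t\) gives \(\E[L(T_t)]\le B_t\ln2\). This uses the fact that the padded lengths are deterministic, so the bound holds on every branch. Combining with Step 2 gives \eqref{eq:query-overlap}.
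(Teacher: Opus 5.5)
Your proposal is correct and follows essentially the same route as the paper. Both arguments use the same induction showing that, given \(T_t=h\), the verifier's state is independent of \(\Psi\) and \(\Psi\) is Haar-distributed on \(C(h)\). Both then use the same chain-rule bound \(\E[\ln(1/\mu_D(C(T_t)))]\le B_t\ln 2\) from the entropy of each padded prover reply, and apply Lemma~\ref{lem:puniform} transcript by transcript with \(p=\mu_D(C(h))\); you simply invoke the geometric lemma before proving the entropy bound rather than after.
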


\begin{proof}
We first show that, conditioned on any possible transcript
\(h\), the hidden state is Haar distributed over \(C(h)\), and the
verifier's private state depends only on \(h\), not on the particular
\(\psi\in C(h)\). 

Both statements hold at the empty transcript. Now suppose first that the
verifier sends the next message. Given the preceding transcript, its private
state is fixed because all earlier verifier oracle calls are identities.
Therefore the distribution of its message, and its residual
post-measurement state, do not depend on \(\psi\). The compatible set
is unchanged. Similarly, at any point in the protocol, when the verifier sends a message, it does not change the compatible set.  

Suppose instead that the prover sends the next message \(z\). Since
\(P_\psi\) is deterministic, the new compatible set is exactly
\[
 C(hz)=\{\psi\in C(h):P_\psi(h)=z\},
\]
where \(hz\) denotes \(h\) extended by \(z\). Conditioning on this reply
restricts the hidden state to \(C(hz)\). This proves the induction.

We next bound the expected logarithmic inverse mass of the compatible set.
At the empty transcript this quantity is zero, and a verifier message
leaves it unchanged. A padded \(b\)-bit prover message partitions
\(C(h)\) into the sets \(C(hz)\). Let
\[
 p(z\mid h)=\frac{\mu_D(C(hz))}{\mu_D(C(h))}.
\]
Therefore, the expected logarithmic mass of the compatible set is:
\begin{align*}
 \sum_z p(z\mid h)\ln\frac{1}{\mu_D(C(hz))}
 &=\ln\frac{1}{\mu_D(C(h))}
   +\sum_z p(z\mid h)\ln\frac1{p(z\mid h)}\\
 &\leq \ln\frac{1}{\mu_D(C(h))}+b\ln2.
\end{align*}
The inequality holds because a \(b\)-bit message has at most \(2^b\)
possible values, so its conditional entropy is at most \(b\ln2\).
Average over the preceding transcript and repeat for every prover message
before query \(t\). Their padded lengths total at most \(B_t\), giving
\begin{equation}
 \E\!\left[\ln\frac{1}{\mu_D(C(T_t))}\right]
 \leq B_t\ln2.
 \label{eq:transcript-information}
\end{equation}

For each possible transcript \(h\), the query state
\(\rho_{t,h}\) is fixed, whereas the hidden state has Haar measure
conditioned on \(C(h)\). Lemma~\ref{lem:puniform} therefore gives
\[
 \E[\bra{\Psi}\rho_{t,h}\ket{\Psi}\mid T_t=h]
 \leq \frac{1+\ln(1/\mu_D(C(h)))}{D}.
\]
Averaging over \(T_t\) and applying
Eq.~\eqref{eq:transcript-information} proves the claim.
\end{proof}

\section{The General Query Lower Bound}
\label{sec:lowerbound}

We combine Lemma~\ref{lem:query-overlap} with the standard quantum-query
hybrid method~\cite{BennettBernsteinBrassardVazirani1997}.
The following theorem also allows access to a fixed background oracle
independent of \(\Psi\), which we suppress from the notation.

\begin{theorem}[Classical-interaction query bound]
\label{thm:value}
Let \(V\) exchange only classical messages, make at most \(q\) queries to
the distinguished \(D\)-dimensional oracle, and receive at most \(B\)
padded prover bits. Then
\begin{equation}
 \E_\Psi\!\left[
   \max_P\left|
     \Acc_{R_\Psi}(V,P)-\Acc_{\id}(V,P)
   \right|
 \right]
 \leq 2q\sqrt{\frac{1+B\ln2}{D}},
 \label{eq:coarse-value}
\end{equation}
where the maximum is over deterministic classical provers. The same bound
holds for \(\E_\Psi[|\val_{R_\Psi}(V)-\val_{\id}(V)|]\).
\end{theorem}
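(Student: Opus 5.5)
Fix \(\Psi=\psi\). Because there are only finitely many deterministic classical provers, a maximizer of \(|\Acc_{R_\psi}(V,P)-\Acc_{\id}(V,P)|\) exists for each \(\psi\). I would construct a measurable selection \(\psi\mapsto P_\psi\) of such maximizers, which is what the experiment in Section~\ref{sec:transcript} needs. Enumerate the finitely many provers \(P^{(1)},\dots,P^{(N)}\). Each map \(\psi\mapsto\Acc_{R_\psi}(V,P^{(j)})\) is continuous: it is a polynomial in the entries of \(\proj{\psi}\), since \(R_\psi\) enters polynomially and the acceptance probability is a finite sum of products of matrix entries. Now let \(P_\psi\) be the maximizer with the smallest index. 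The set where index \(j\) is chosen is defined by finitely many continuous inequalities, so it is Borel. This makes every \(C(h)\) measurable. The same construction, using \(\Acc_{R_\psi}\) in place of the difference, handles \(\val\).

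Next I would run the standard hybrid argument for the fixed prover \(P_\psi\). For \(t=0,\dots,q\), let \(H_t\) be the execution in which the first \(t\) verifier query slots are answered by \(\id\) and the remaining ones by \(R_\psi\). Then \(H_0\) is the real execution and \(H_q\) is the all-identity one. The prover is the deterministic \(P_\psi\) in every hybrid, so it behaves as a fixed classical function of the transcript. Hybrids \(H_{t-1}\) and \(H_t\) differ only at slot \(t\). Condition on the transcript \(T_t=h\) before that slot. The joint state up to that point is exactly the identity-prefix execution. On the oracle register (possibly with a control qubit), \(\id\) and \(R_\psi\) differ by \(2\proj{\psi}\), tensored with the control projector where there is one. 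For a purified state \(\ket{\phi}\) whose reduced query state is \(\rho_{t,h}\), this gives \(\norm{(R_\psi-\id)\ket{\phi}}\le 2\sqrt{\bra{\psi}\rho_{t,h}\ket{\psi}}\). Hence the trace distance of the post-query states is at most \(2\sqrt{\bra\psi\rho_{t,h}\ket\psi}\). Everything after slot \(t\), including the later measurements that produce messages, the prover's deterministic replies, and the remaining queries, is a fixed \(\psi\)-dependent operation applied identically in both hybrids. By contractivity, the acceptance probabilities differ by at most \(\E_{T_t}[2\sqrt{\bra\psi\rho_{t,T_t}\ket\psi}]\). The averaging over transcripts is legitimate because both hybrids share the same distribution up to slot \(t\).

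Then I would average over \(\Psi\) and apply Jensen's inequality to move the square root outside:
\[
\E_{\Psi,T_t}\bigl[2\sqrt{\bra\Psi\rho_{t,T_t}\ket\Psi}\bigr]\le 2\sqrt{\E_{\Psi,T_t}[\bra\Psi\rho_{t,T_t}\ket\Psi]}\le 2\sqrt{\tfrac{1+B_t\ln2}{D}},
\]
using Lemma~\ref{lem:query-overlap} together with \(B_t\le B\). The triangle inequality over the \(q\) hybrids then gives \(\E_\Psi|\Acc_{R_\Psi}(V,P_\Psi)-\Acc_{\id}(V,P_\Psi)|\le 2q\sqrt{(1+B\ln2)/D}\). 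Since \(P_\Psi\) attains the maximum, this is Eq.~\eqref{eq:coarse-value}. For the \(\val\) statement I would use \(|\max_P f-\max_P g|\le\max_P|f-g|\) pointwise in \(\psi\) and conclude with the same bound.

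The main obstacle I expect is the hybrid step. It has to check that Lemma~\ref{lem:query-overlap} applies to the specific prover chosen as a maximizer, even though that prover depends on \(\psi\) and on the real oracle \(R_\psi\). The lemma allows an arbitrary measurable family of deterministic provers, so this works out. Care is also needed so that the per-\(\psi\) bound is taken inside the transcript expectation before Jensen is applied. Measurability of the selection is the other technical point, and the continuity argument above resolves it.
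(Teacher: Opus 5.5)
Your proposal is correct and follows the paper's proof essentially step for step. It uses the same first-maximizer Borel selection justified by continuity of $\psi\mapsto\Acc_{R_\psi}(V,P)$, the same one-query-at-a-time hybrid with the fixed prover $P_\psi$, conditioning on $T_t$ followed by Lemma~\ref{lem:query-overlap} and concavity of the square root, and the same inequality $|\max_P f-\max_P g|\le\max_P|f-g|$ for the $\val$ statement. The only cosmetic difference is that you bound the per-query distance by $\norm{(R_\psi-\id)\ket{\phi}}\le 2\sqrt{u}$, whereas the paper uses the exact inner product $1-2u$; both give the same $2\sqrt{u}$ estimate.
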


\begin{proof}
For every \(\psi\), choose a deterministic classical prover \(P_\psi\)
maximizing the absolute difference in Eq.~\eqref{eq:coarse-value}.
At each fixed input length, there are finitely many possible deterministic
response functions on the protocol's transcript tree. Order them and choose
the first maximizer. Since acceptance probabilities for each fixed prover
are continuous in \(\psi\), this choice is Borel measurable.

Use the same prover throughout the hybrid. Only the verifier makes oracle calls as the prover has full knowledge of $\psi$. Order the verifier's oracle calls chronologically. For
\(t=0,\ldots,q\), let \(\Omega_t^\psi\) be the final state of the
verifier and public transcript when the first \(t\) calls use \(\id\)
and the remaining calls use \(R_\psi\). 

To bound the effect of one replacement, purify the state before the query
to \(\ket{\Phi}\), and let \(u\) be the overlap of its oracle-input
register with \(\ket{\psi}\). The two post-query purifications have
inner product \(1-2u\), so their trace distance is
\[
 \sqrt{1-|1-2u|^2}=2\sqrt{u(1-u)}\leq2\sqrt u.
\]
Discarding the purifying register and applying later operations cannot
increase trace distance. For a controlled query, \(u\) is replaced by
the weight on \(\ket{1}\otimes\ket{\psi}\), which is at most \(u\),
so the same bound holds.

Conditioned on transcript \(h\) before query \(t\), let
\(\rho_{t,h}\) be the reduced density operator of the oracle-input
register, obtained by tracing out all other registers; the overlap
defined above is therefore \(u=\langle\psi|\rho_{t,h}|\psi\rangle\). Convexity of trace distance, concavity of the square root, and
Lemma~\ref{lem:query-overlap} give
\begin{align*}
 \E_\Psi[\mathrm D(\Omega_{t-1}^\Psi,\Omega_t^\Psi)]
 &\leq 2\E_{\Psi,T_t}
     \left[\sqrt{\bra{\Psi}\rho_{t,T_t}\ket{\Psi}}\right]\\
 &\leq 2\sqrt{\frac{1+B\ln2}{D}}.
\end{align*}

We can telescope over the \(q\) queries to obtain
Eq.~\eqref{eq:coarse-value} for the chosen maximizing provers.

We now compare the optimal acceptance probabilities. Suppose first
that \(\val_{R_\psi}(V)\geq\val_{\id}(V)\), and choose a deterministic
prover \(P^\star\) that achieves \(\val_{R_\psi}(V)\).
This prover specifies a classical reply to every possible transcript.
We can use exactly the same reply rules when the verifier instead
queries the identity oracle. Since \(\val_{\id}(V)\) is the maximum
acceptance probability over all prover strategies, it satisfies
\[
 \val_{\id}(V)\geq \Acc_{\id}(V,P^\star).
\]
Together with
\(\val_{R_\psi}(V)=\Acc_{R_\psi}(V,P^\star)\), this gives
\[
 \val_{R_\psi}(V)-\val_{\id}(V)
 \leq
 \Acc_{R_\psi}(V,P^\star)-\Acc_{\id}(V,P^\star).
\]
Thus, the difference between the optimal values is at most the
difference obtained by keeping this one prover fixed.
If the identity-oracle value is larger, choose a prover optimal
for the identity oracle and reverse the argument. In either case,
\[
 \bigl|\val_{R_\psi}(V)-\val_{\id}(V)\bigr|
 \leq
 \max_P
 \bigl|\Acc_{R_\psi}(V,P)-\Acc_{\id}(V,P)\bigr|.
\]
Averaging over the hidden state and applying
Eq.~\eqref{eq:coarse-value} proves the second part of the theorem.
In particular, a constant average absolute gap between the two
values requires
\[
 q=\Omega\!\left(\sqrt{\frac{D}{B+1}}\right).
\]
\end{proof}

\section{One Oracle Separating Every Message Count}
\label{sec:separation}

For each \(n\geq 1\), choose a bit \(b_n\in\{0,1\}\) and a Haar random $n$-qubit state \(|\psi_n\rangle\).
Define the \(n\)-qubit unitary
\[
 U_n=\id_{2^n}-2b_n|\psi_n\rangle\langle\psi_n|,
\]
where \(\id_{2^n}\) is the identity on \(n\) qubits.
Thus \(U_n=\id_{2^n}\) when \(b_n=0\), while
\(U_n=R_{\psi_n}\) when \(b_n=1\).
The quantum oracle is the family
\(\mathcal U=(U_n)_{n\geq 1}\).
The construction below specifies the bits \(b_n\) and states
\(|\psi_n\rangle\).

For each length \(n\), define
\[
 L_n^{\mathcal U}=
 \begin{cases}
   \{0^n\}, & b_n=1,\\
   \varnothing, & b_n=0.
 \end{cases}
\]
The associated unary language is
\[
 L^{\mathcal U}
 =\bigcup_{n\geq 1}L_n^{\mathcal U}
 =\{0^n:b_n=1\}.
\]
Thus deciding whether \(0^n\in L^{\mathcal U}\) means determining
whether \(U_n\) is the identity or a reflection.

\begin{theorem}[Simultaneous separation]
\label{thm:all-constants}
There exists a quantum oracle
\(\cU\) such that
\[
 L^{\cU}\in\textsf{UniqueQMA}^{\cU}
 \qquad\text{and}\qquad
 L^{\cU}\notin\QCIP^{\cU}.
\]
In particular, the same oracle satisfies
\(\textsf{UniqueQMA}^{\cU}\not\subseteq\QCIP[r]^{\cU}\) for every fixed integer
\(r\geq1\).
\end{theorem}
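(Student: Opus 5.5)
The plan is the standard Aaronson--Kuperberg diagonalization: choose the bits \(b_n\) and states \(\ket{\psi_n}\) at random, show that every fixed \(\QCIP\) verifier fails on infinitely many lengths with probability one, and take a countable intersection. The upper bound is immediate for every choice of \((b_n,\psi_n)\). On input \(0^n\) the honest \textsf{UniqueQMA} witness is \(\ket{\psi_n}\). The verifier prepares \(\ket{+}\) on a control qubit, applies controlled-\(U_n\) to the witness, and measures the control in the \(\ket{\pm}\) basis. When \(b_n=1\) the witness \(\ket{\psi_n}\) is accepted with probability \(1\). Every witness orthogonal to \(\ket{\psi_n}\) is accepted with probability \(0\), which gives the unique-witness condition. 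When \(b_n=0\) the oracle is the identity, so every witness is rejected with certainty. Hence \(L^{\cU}\in\textsf{UniqueQMA}^{\cU}\) for every oracle in the family.

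For the lower bound, enumerate all polynomial-time \(\QCIP\) verifiers \(V_1,V_2,\ldots\). Each is a uniform \(\QPT\) machine with a polynomial message-count bound \(n^d\) and hence polynomial total communication. Sample every \(b_n\) uniformly and every \(\psi_n\leftarrow\mu_{2^n}\), all independently. Fix \(V_i\), which on input \(0^n\) makes \(q(n)=\mathrm{poly}(n)\) queries and receives \(B(n)=\mathrm{poly}(n)\) padded prover bits. By the length-respecting convention, \(V_i\) on input \(0^n\) queries only \(U_n\), and no other \(U_m\) is involved, so we may treat them as a fixed background oracle. Theorem~\ref{thm:value} then gives
\[
\E_{\Psi}\bigl[|\val_{R_\Psi}(V_i)-\val_{\id}(V_i)|\bigr]\le 2q\sqrt{(1+B\ln 2)/2^n}=:\varepsilon_n\to0 .
\]
Suppose \(V_i\) decides \(L^{\cU}\) at length \(n\) with completeness \(2/3\) and soundness \(1/3\). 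Then \(\val_{U_n}(V_i)\ge2/3\) if \(b_n=1\), and \(\val_{U_n}(V_i)\le1/3\) if \(b_n=0\). The value \(\val_{\id}(V_i)\) does not depend on \((b_n,\psi_n)\). If it is at least \(1/2\), then correctness at \(b_n=0\) fails. If it is below \(1/2\), then correctness at \(b_n=1\) requires \(|\val_{R_{\psi_n}}-\val_{\id}|>1/6\). By Markov's inequality this happens with probability at most \(6\varepsilon_n\). So \(V_i\) is correct at length \(n\) with probability at most \(\tfrac12+3\varepsilon_n\le 3/4\) for all large \(n\).

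The events ``\(V_i\) is correct at length \(n\)'' for different \(n\) are independent under the product sampling, since each depends only on \((b_n,\psi_n)\). It is therefore cleanest to use a sparse sequence of lengths to avoid any interaction between lengths. The probability that \(V_i\) is correct on all sufficiently large \(n\) is then \(\lim_N\prod_{n\ge N}(3/4)=0\). Summing over the countably many \(V_i\) and countably many thresholds \(N\), with probability one no verifier decides \(L^{\cU}\). Fixing such an outcome yields the oracle, and the final claim about \(\QCIP[r]\) follows because \(\QCIP[r]\subseteq\QCIP\).

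The main obstacle is measurability and the role of the prover's optimal strategy when a verifier is applied to a random oracle. Theorem~\ref{thm:value} already handles this by bounding the difference of optimal values \(\val\), so the acceptance condition can be phrased entirely in terms of \(\val_{U_n}(V_i)\). A second point to check is that each \(V_i\) on input \(0^n\) queries only \(U_n\), so that independence across lengths and the background-oracle clause of Theorem~\ref{thm:value} apply as stated.
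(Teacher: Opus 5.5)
Your proof is correct, but it builds the oracle differently from the paper.

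\textbf{The paper's route.} The paper does a deterministic, stage-by-stage diagonalization. For the \(i\)-th verifier it picks one fresh length \(n_i\) at which the bound of Theorem~\ref{thm:value} is below \(1/3\), and looks at \(v_0=\val_{\id}(V_i)\). If \(v_0>1/3\), it sets \(U_{n_i}=\id\), which breaks soundness. Otherwise it uses \(\E_\Psi[\val_{R_\Psi}(V_i)]<2/3\) to fix one specific \(\psi_{n_i}\), which breaks completeness. All unused lengths are set to the identity.

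\textbf{Your route.} You sample every \((b_n,\psi_n)\) independently. Splitting at \(\val_{\id}(V_i)\gtrless 1/2\) and applying Markov's inequality bounds the per-length success probability by \(1/2+3\varepsilon_n\). The length-respecting convention makes the correctness events at different lengths independent. An infinite product then shows each \(V_i\) fails with probability one, and you intersect over the countably many verifiers.

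\textbf{Comparison.} Both arguments rest on the same key ingredient, Theorem~\ref{thm:value}, and differ only in how the oracle is assembled.
\begin{itemize}
\item The paper needs only an existence statement at a single length per verifier. It avoids independence across lengths and infinite products.
\item Yours avoids the adaptive choice of \(b_{n_i}\) and the fresh-length bookkeeping.
\item Yours also gives a stronger conclusion: almost every oracle drawn from this natural random distribution achieves the separation, and each verifier fails on infinitely many lengths rather than on one designated length.
\end{itemize}
Your remark about passing to a sparse sequence of lengths is unnecessary, since the per-length events are already independent under the product sampling.
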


\begin{proof}
First, we show that \(L^{\cU}\in\textsf{UniqueQMA}^{\cU}\) for every oracle of this form.
Reject non-unary inputs. On input \(0^n\), the verifier receives an
\(n\)-qubit witness \(\ket{\phi}\), prepares a control qubit in
\(\ket{+}\), applies controlled-\(U_n\), applies a Hadamard gate to
the control, and accepts exactly when the outcome is \(1\).
If \(U_n=R_{\psi_n}\), the witness \(\ket{\psi_n}\) is accepted with
probability one. If \(U_n=\id\), every witness is rejected.

Enumerate all polynomial-time uniform quantum verifiers with classical
communication as \(V_1,V_2,\ldots\), using a fixed
finite universal gate set. Let \(q_i(n)\) and \(B_i(n)\) be polynomial
bounds on the oracle queries and total prover-to-verifier communication of
\(V_i\). This enumeration includes every verifier with polynomially many
messages.

At stage \(i\), choose a fresh input length \(n_i\), larger than every
length used earlier, such that the right-hand side of
Eq.~\eqref{eq:coarse-value}, with \(q=q_i(n_i)\), \(B=B_i(n_i)\), and
\(D=2^{n_i}\), is less than \(1/3\). This is possible because \(q_i\)
and \(B_i\) are polynomials whereas \(2^{n_i}\) is exponential.

Consider \(V_i\) on input \(0^{n_i}\), and let
\(v_0=\val_{\id}(V_i)\) when the length-\(n_i\) component is the
identity. There are two cases.
\begin{itemize}
\item If \(v_0>1/3\), set \(U_{n_i}=\id\). Then
      \(0^{n_i}\notin L^{\cU}\), but \(V_i\) violates soundness.
\item If \(v_0\leq1/3\), Theorem~\ref{thm:value} gives
      \[
       \E_\Psi[\val_{R_\Psi}(V_i)]
       \leq v_0+\E_\Psi[|\val_{R_\Psi}(V_i)-v_0|]
       <\frac23,
      \]
      where \(\Psi\) is Haar distributed on \(n_i\) qubits.
      Hence some \(\ket{\psi_{n_i}}\) satisfies
      \(\val_{R_{\psi_{n_i}}}(V_i)<2/3\). Set
      \(U_{n_i}=R_{\psi_{n_i}}\). Then \(0^{n_i}\in L^{\cU}\), but
      \(V_i\) violates completeness.
\end{itemize}

Set all unused components to the identity. The length-respecting restriction
on verifier oracle access ensures that later stages do not change
\(V_i\)'s behavior on \(0^{n_i}\). The value \(\val\) already maximizes
over every deterministic prover, so future oracle components cannot give
the prover an additional advantage omitted by the bound. Thus every
polynomial-round $\QCIP$ verifier fails on some input, while the
same language lies in \(\textsf{UniqueQMA}^{\cU}\).
\end{proof}

\section{An Average-Case \(\QMA\)--\(\QCIP[2]\) Gap Problem}
\label{sec:gap}

The unary language used in Section~\ref{sec:separation} gives a worst-case
separation, but it does not provide efficiently samplable instances and
witnesses.  We now give a sampler-augmented oracle satisfying the
average-case definition of Barhoush, Morimae, and Movassagh
~\cite{BarhoushMorimaeMovassagh2026}.  The construction and the seven-hybrid
proof follow the \(\QMA\)--\(\QCMA\) gap oracle of Barhoush, Morimae,
Nishimaki, and Yamakawa~\cite{BarhoushMorimaeNishimakiYamakawa2026}.
The new ingredient is the threshold-coupling argument inside
the proof of Theorem~\ref{thm:gap}, which controls adaptive queries to a
\(\QCIP[2]\) decision oracle.

\subsection{The relativized gap definition}

We use explicit descriptions over a fixed finite universal
gate set.  A query to a \(\QCIP[2]\) decision oracle is a string
\(z=(V_1,V_2,x)\).  On input \(x\), the first verifier circuit produces a
classical challenge \(a\) and a private quantum state that it passes to the  second verifier.  The prover
returns a classical answer depending on \(a\), and the second verifier
circuit decides whether to accept.  Let
\(\val_{\cT}(z)\) be the largest acceptance probability over all deterministic
classical provers when both verifier circuits have access to the quantum
oracle \(\cT\).  

We use the convention in \cite{BarhoushMorimaeMovassagh2026} to define a $\QCIP[2]$ oracle. Note that for a $\QCIP[2]$ language some inputs do not belong to yes or no instances, so it is not clear how the oracle should respond to such inputs. We deal with this as in \cite{BarhoushMorimaeMovassagh2026} by defining a whole class of oracles where any function can belong to the class as long as it accepts yes instances and rejects no instances. Then, any adversary that uses a $\QCIP[2]$ oracle to solve some task should be able to succeed regardless of which function in the class it is given oracle access to. We call a function in this class a completion of the $\QCIP[2]$ oracle. 

\begin{definition}[Relativized two-message decision oracle]
\label{def:qcip-decision}
The set \(\mathcal O_{\QCIP[2]}^{\cT}\) consists of all total Boolean
functions \(\cO\) satisfying
\[
 \cO(z)=1\quad\text{if }\val_{\cT}(z)\geq\frac23,
 \qquad
 \cO(z)=0\quad\text{if }\val_{\cT}(z)\leq\frac13.
\]
When the value lies strictly between \(1/3\) and \(2/3\), the answer is
unrestricted.  Oracle calls are classical.  In this oracle construction, a completion is an arbitrary total
Boolean function satisfying these two constraints.
\end{definition}

The quantifier over completions prevents an algorithm from relying on
specially chosen answers to ambiguous queries.

\begin{definition}[Relativized \textsf{QMA}--\textsf{QCIP}[2] gap problem]
\label{def:gap}
Let \(G^{\cT}\in\QMA^{\cT}\), let \(Q^{\cT}\) be its verifier, and let
\(\mathcal Y_n\) and \(\mathcal N_n\) be distributions supported on
\(G^{\cT}\cap\{0,1\}^n\) and
\(\{0,1\}^n\setminus G^{\cT}\), respectively.  The pair
\((\mathcal Y,\mathcal N)\) is a gap problem if:
\begin{enumerate}
\item for every polynomial \(t\), a \textsf{QPT} oracle algorithm on input
      \((1^n,1^{t(n)})\) samples
      \((x,\ket{w_x}^{\otimes t(n)})\), where
      \(x\) has distribution \(\mathcal Y_n\) and \(\ket{w_x}\) is accepted
      by \(Q^{\cT}\) with probability at least \(2/3\);
\item a possibly inefficient oracle algorithm samples \(\mathcal N_n\); and
\item for every polynomial \(s\) and every \textsf{QPT} \(A\), there is a completion
      \(\cO\in\mathcal O_{\QCIP[2]}^{\cT}\) such that
      \[
      \begin{aligned}
       &\Big|\Pr_{X_1,\ldots,X_s\leftarrow\mathcal Y_n}
          [A^{\cT,\cO}(X_1,\ldots,X_s)=1]\\
       &\quad-\Pr_{X_1,\ldots,X_s\leftarrow\mathcal N_n}
          [A^{\cT,\cO}(X_1,\ldots,X_s)=1]\Big|\\
       &\qquad\leq\negl(n).
      \end{aligned}
      \]
\end{enumerate}
The distinguisher has quantum query access to \(\cT\) and classical access
to \(\cO\).
\end{definition}

\subsection{The random-subset oracle}

Fix \(n\geq4\), let \(N=2^n\), and put
\[
             K=\left\lfloor 2^{n/2}\right\rfloor.
\]
Choose \(L_n\) uniformly among all \(K\)-element subsets of
\(\{0,1\}^n\).  Independently choose an \(n\)-qubit Haar-random state
\(\ket{\psi_x}\) for every \(x\in\{0,1\}^n\).

The oracle has one testing component and two sampling components.  The
testing component \(W_n\) has a control qubit, an \(n\)-bit label register,
and an \(n\)-qubit state register.  On computational-basis control and label
registers, it acts as
\[
 W_n\ket{c,x}\ket{\phi}
  =\ket{c,x}\,
    R_{\psi_x}^{\,c\,\mathbf 1[x\in L_n]}\ket{\phi}.
 \tag{8}\label{eq:indexed-test}
\]
Thus a controlled query at \(x\in L_n\) applies the reflection about
\(\ket{\psi_x}\), while a query at \(x\notin L_n\) applies the identity.

For a unit vector \(\ket a\) orthogonal to \(\ket0\), write
\[
 S(a)=\ket a\!\bra0+\ket0\!\bra a+\id-\proj0-\proj a.
 \tag{9}\label{eq:swap-unitary}
\]
This unitary swaps \(\ket0\) and \(\ket a\) and fixes their common
orthogonal complement.  For every integer \(t\geq1\), define
\[
 \ket{Y_{n,t}}
   =\frac1{\sqrt K}\sum_{x\in L_n}
       \ket{x}\ket{\psi_x}^{\otimes t}\ket1,
 \qquad
 \ket{N_n}
   =\frac1{\sqrt{N-K}}\sum_{x\notin L_n}\ket{x}\ket1.
 \tag{10}\label{eq:sampler-states}
\]
The oracle contains the blocks
\[
       \mathsf{GenYes}_{n,t}=S(Y_{n,t})
       \quad\text{for every }t\geq1,
       \qquad
       \mathsf{GenNo}_n=S(N_n).
\]
The integer \(t\) is a classical tag.  Equivalently, all these blocks form
one tagged block-diagonal unitary.  A polynomial-time machine can address
only tags and registers of polynomial length.

Let \(\cT\) denote the collection of these independently sampled components
over all input lengths, and define
\[
             G^{\cT}=\bigcup_{n\geq4}L_n.
\]
Membership at the finitely many smaller lengths is defined arbitrarily.

\begin{lemma}[Verification and sampling]
\label{lem:gap-sampling}
For every realization of \(\cT\), the language \(G^{\cT}\) belongs to
\(\QMA^{\cT}\) with perfect completeness and perfect soundness.  Moreover, the
uniform distributions on \(L_n\) and on its complement satisfy the first two
conditions of Definition~\ref{def:gap}.
\end{lemma}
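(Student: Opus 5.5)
The proof has three parts: the $\QMA$ verifier, the yes-sampler, and the no-sampler.

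\textbf{Verifier.} The verifier for $G^{\cT}$ mirrors the one in Theorem~\ref{thm:all-constants}, now indexed by the label. On input $x\in\{0,1\}^n$ with $n\geq4$, it receives an $n$-qubit witness $\ket{\phi}$. It prepares a control qubit in $\ket{+}$, applies $W_n$ to $\ket{c,x}\ket{\phi}$ with the control in the $c$ register, applies a Hadamard to the control, and accepts iff the outcome is $1$. By Eq.~\eqref{eq:indexed-test}, the controlled operation is controlled-$R_{\psi_x}$ when $x\in L_n$ and the identity otherwise. In the first case, the witness $\ket{\psi_x}$ picks up phase $-1$ on the $c=1$ branch, so the control ends in $\ket{-}$ and the verifier accepts with probability $1$. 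In the second case, the control stays in $\ket{+}$ for every witness, so the verifier rejects with probability $1$. The finitely many short lengths are handled by hardcoding their membership into the verifier. This gives perfect completeness and perfect soundness for every realization of $\cT$.

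\textbf{Yes-sampler.} On input $(1^n,1^{t(n)})$, the sampler prepares the all-zero state on the label, $t(n)$ copies of the state register, and the flag qubit. It then queries $\mathsf{GenYes}_{n,t(n)}$ on this state. The main point to check is that $\ket{Y_{n,t}}$ is orthogonal to $\ket0$, so that $S(Y_{n,t})$ is well defined and maps $\ket0\mapsto\ket{Y_{n,t}}$. This holds because the flag qubit of every term in $\ket{Y_{n,t}}$ is $\ket1$. The sampler then measures the label register. Since the branches have orthogonal labels and equal amplitudes $1/\sqrt K$, the outcome $x$ is uniform on $L_n$. The post-measurement state is $\ket{x}\ket{\psi_x}^{\otimes t}\ket1$, so the sampler outputs $x$ together with $\ket{\psi_x}^{\otimes t}$. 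Each copy of $\ket{\psi_x}$ is accepted with probability $1\geq2/3$ by the verifier above. The sampler runs in polynomial time, because the tag $t(n)$ and the registers have polynomial length.

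\textbf{No-sampler.} The same procedure with $\mathsf{GenNo}_n$ applied to $\ket0$ yields a uniformly random $x\notin L_n$, using that $\ket{N_n}\perp\ket0$ by the same flag argument. This sampler is in fact efficient, which is more than the second condition of Definition~\ref{def:gap} requires.

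None of these steps is a real obstacle. The only care needed is in the orthogonality check that makes $S(\cdot)$ a legitimate swap, and in noting that $K\geq1$ and $N-K\geq1$ for $n\geq4$, so that both sampler states are well defined.
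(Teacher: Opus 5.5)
Your proposal is correct and follows the paper's proof essentially step for step: a Hadamard test on the indexed testing block \(W_n\) gives perfect completeness and soundness, and one query to \(\mathsf{GenYes}_{n,t}\) (resp.\ \(\mathsf{GenNo}_n\)) on \(\ket{0}\) followed by measuring the label register gives the uniform samplers. Your extra checks are details the paper leaves implicit: the flag qubit making the sampler states orthogonal to \(\ket{0}\), the bounds \(K\geq1\) and \(N-K\geq1\), and hardcoding membership at the short lengths.
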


\begin{proof}
On input \(x\), the \textsf{QMA} verifier performs the Hadamard test for the controlled
testing block \(W_n\).  If \(x\in L_n\), the witness
\(\ket{\psi_x}\) is a \(-1\) eigenvector and is accepted with probability
one.  If \(x\notin L_n\), the addressed block is the identity for every
witness, so the verifier rejects with probability one.

On input \((1^n,1^t)\), the yes sampler applies
\(\mathsf{GenYes}_{n,t}\) to \(\ket0\), measures the label register, and
returns the remaining witness registers.  Its output is a uniform
\(x\in L_n\) together with \(\ket{\psi_x}^{\otimes t}\).
Applying \(\mathsf{GenNo}_n\) to \(\ket0\) and measuring the label gives a
uniform element of the complement.  Each sampler makes one oracle query.
\end{proof}

\subsection{Stability of the oracle hybrids}

The seven-hybrid proof changes one sampler term or one reflection at a time.
The sampler transitions use the elementary perturbation estimate below.  The
reflection transitions will be handled directly inside the seven-hybrid proof
by applying Theorem~\ref{thm:value} to a suitable meta-verifier.
Throughout this subsection, ``exponentially small'' means at most
\(2^{-cn}\) for some constant \(c>0\) and all sufficiently large \(n\).
The constant and the starting point may depend on the fixed algorithm and
its polynomial resource bounds.  We freely absorb polynomial factors into
this notation.

\begin{lemma}[Puncturing a coherent sampler]
\label{lem:sampler-perturbation}
Let \(\ket A=M^{-1/2}\sum_{j=1}^M\ket{a_j}\), where the
\(\ket{a_j}\)'s are orthonormal, and let
\[
       \ket{A^{-i}}
       =\frac1{\sqrt{M-1}}\sum_{j\neq i}\ket{a_j}.
\]
Then
\[
 \norm{S(A)-S(A^{-i})}=O(M^{-1/2}).
\]
The same estimate holds when one orthogonal term is added to an
\(M\)-term superposition.
\end{lemma}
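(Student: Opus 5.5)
The plan is a direct computation. Write $P=\proj{A}$ and $P'=\proj{A^{-i}}$. First I would record a closed form for $S(a)$ valid whenever $\ket a$ is a unit vector orthogonal to $\ket0$:
\[
 S(a)=\id-(\ket0-\ket a)(\bra0-\bra a).
\]
This is simply the reflection about the unit vector $(\ket0-\ket a)/\sqrt2$. To check it, expand: $\id-\proj0-\proj a+\ket0\!\bra a+\ket a\!\bra0$, which is exactly \eqref{eq:swap-unitary}. Orthogonality of $\ket a$ to $\ket0$ is needed here; it holds for $\ket{A}$ and $\ket{A^{-i}}$ in the application, because every term carries a flag $\ket1$.

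Next, subtract the two closed forms. With $u=\ket0-\ket A$ and $u'=\ket0-\ket{A^{-i}}$,
\[
 S(A)-S(A^{-i})=u'u'^\dagger-uu^\dagger
 =u'(u'-u)^\dagger+(u'-u)u^\dagger .
\]
Since $\norm{u},\norm{u'}\le 2$, the operator norm is at most $4\norm{\ket A-\ket{A^{-i}}}$. So the problem reduces to bounding a vector distance.

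For the vector distance, compute the inner product
\[
 \langle A^{-i}|A\rangle=\frac{M-1}{\sqrt{M(M-1)}}=\sqrt{1-1/M},
\]
which gives
\[
 \norm{\ket A-\ket{A^{-i}}}^2=2-2\sqrt{1-1/M}\le 2/M .
\]
Hence $\norm{S(A)-S(A^{-i})}\le 4\sqrt{2/M}=O(M^{-1/2})$. The case of adding one orthogonal term is the same statement with $M$ replaced by $M+1$ and the roles of the two vectors swapped, so it gives $O((M+1)^{-1/2})=O(M^{-1/2})$.

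The only step that needs care is the first: making sure the vectors fed into $S$ are unit vectors orthogonal to $\ket0$, so that the closed form applies. Everything after that is elementary. If one prefers to avoid the closed form, a fallback is to note that $S(a)$ acts as the identity outside the span of $\ket0$, $\ket A$, $\ket{A^{-i}}$, and to bound the difference on that span directly.
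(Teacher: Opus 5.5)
Your proof is correct and follows essentially the same route as the paper. Both compute $\langle A^{-i}|A\rangle=\sqrt{1-1/M}$ to get $\norm{\ket A-\ket{A^{-i}}}=O(M^{-1/2})$, and both then bound $\norm{S(A)-S(A^{-i})}$ by a constant multiple of that vector distance; your reflection form $S(a)=\id-(\ket0-\ket a)(\bra0-\bra a)$ is just a tidier packaging of the paper's term-by-term expansion with $\norm{\proj a-\proj b}\leq 2\norm{\ket a-\ket b}$, which gives the same factor $4$.
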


\begin{proof}
The two target states have inner product \(\sqrt{(M-1)/M}\), and hence
\[
 \norm{\ket A-\ket{A^{-i}}}^2
 =2-2\sqrt{1-\frac1M}
 =O(M^{-1}).
\]
Expanding the definition of \(S\) in Eq.~\eqref{eq:swap-unitary} and using
\(\norm{\proj a-\proj b}\leq
2\norm{\ket a-\ket b}\) gives the claimed operator-norm bound.
The calculation for adding a term is identical.  For every fixed
deterministic prover, the ordinary query hybrid changes the acceptance probability by at
most twice the number of queries times the operator-norm change.  Taking the
maximum over deterministic provers preserves this bound.

For the tagged family \(\{\mathsf{GenYes}_{n,t}\}_t\), puncturing the same
label changes every tagged block by the same \(O(M^{-1/2})\) amount.  The norm of
the tagged direct-sum difference is
the maximum of the block norms, so the estimate is uniform in \(t\).
\end{proof}

We now choose the answers on ambiguous \(\QCIP[2]\) queries.  For every
binary string \(z\), independently sample
\[
                   \Theta_z\leftarrow[2/5,3/5]
\]
once and keep it fixed.  Write \(\Theta=(\Theta_z)_z\) for this threshold
table.  Given a base oracle \(\cT\), define
\[
 \cO_{\cT,\Theta}(z)
        =\mathbf 1[\val_{\cT}(z)\geq\Theta_z].
 \tag{11}\label{eq:threshold-completion}
\]
The random table is only a probabilistic-method device.  Once it is fixed,
the completion is deterministic and every repeated query receives the same
answer.
Every such function is a valid deterministic completion: values at least
\(2/3\) are always accepted and values at most \(1/3\) are always rejected.
If two base oracles give values \(v\) and \(v'\) to the same description,
then, using the same threshold in both worlds,
\[
 \Pr_\Theta[
   \mathbf 1[v\geq\Theta_z]\neq
   \mathbf 1[v'\geq\Theta_z]]
 \leq5|v-v'|.
 \tag{12}\label{eq:threshold-change}
\]
Indeed, an answer differs only when the uniform threshold falls between
\(v\) and \(v'\), and the interval \([2/5,3/5]\) has length \(1/5\).

\subsection{The gap problem}

We follow the seven-hybrid argument of
Barhoush, Morimae, Nishimaki, and Yamakawa
\cite[Theorem~8]{BarhoushMorimaeNishimakiYamakawa2026}, using
Theorem~\ref{thm:value} to handle classical interaction.

\begin{theorem}[Oracle gap problem]
\label{thm:gap}
For some realization \(\cT\) of the construction above, the uniform
distributions on \(L_n\) and its complement form a
\(\QMA\)--\(\QCIP[2]\) gap problem against uniform \textsf{QPT} distinguishers.
\end{theorem}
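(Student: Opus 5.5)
Conditions 1 and 2 of Definition~\ref{def:gap} are Lemma~\ref{lem:gap-sampling}, so the work is condition 3. I would fix a uniform \textsf{QPT} distinguisher \(A\) with polynomial bounds on its sample count \(s\), its queries to \(\cT\), and its classical calls to \(\cO\). For every such \(A\) I use the threshold completion \(\cO_{\cT,\Theta}\) of Eq.~\eqref{eq:threshold-completion}. The first goal is to bound the advantage \emph{averaged} over \(\cT\) and \(\Theta\) by an exponentially small quantity.

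The averaged bound comes from a hybrid chain in the style of Barhoush--Morimae--Nishimaki--Yamakawa.
\begin{itemize}
\item Start with \(X_1,\ldots,X_s\) sampled from \(L_n\).
\item Puncture each \(X_i\) from \(\mathsf{GenYes}_{n,t}\) for all tags \(t\), and add it to \(\mathsf{GenNo}_n\). Lemma~\ref{lem:sampler-perturbation} bounds each such change by \(O(K^{-1/2})\) in operator norm, uniformly in \(t\).
\item Replace the reflections \(R_{\psi_{X_i}}\) in \(W_n\) by the identity, i.e. move \(X_i\) out of \(L_n\). After this, the samples are distributed like uniform non-members.
\item Run the symmetric steps backwards to reach the no-distribution with the original oracle.
\end{itemize}

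\textbf{Sampler steps.} Each \(\cT\)-query moves the state by \(O(K^{-1/2})\). Each \(\QCIP[2]\) query \(z\) has its value \(\val_{\cT}(z)\) shifted by at most \(2q_z\,O(K^{-1/2})\), by the remark in the proof of Lemma~\ref{lem:sampler-perturbation}. Eq.~\eqref{eq:threshold-change} then shows the classical answer flips with probability at most \(5\) times that over \(\Theta\). Union bounding over \(A\)'s polynomially many oracle calls gives an exponentially small loss.

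\textbf{Reflection step (the main obstacle).} Here the meta-verifier idea enters. Once \(X_i\) is punctured from the samplers, \(\psi_{X_i}\) is a Haar state accessible only through \(W_n\) at label \(X_i\), both to \(A\) and inside every \(\QCIP[2]\) query. I would build a single classical-interaction verifier \(V^\ast\) that simulates \(A\) with the remaining oracle components fixed as background, as permitted in Theorem~\ref{thm:value}. For each \(\QCIP[2]\) call, the answer would be supplied by the unbounded prover as one bit. The difficulty is that the prover's bit is the \emph{thresholded value}, which depends on \(\psi\) through queries made inside the called circuits, not only through \(A\)'s queries.

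I would handle this with the threshold coupling rather than by charging \(V^\ast\) those inner queries. Define \(\Delta_z(\psi)=|\val_{R_\psi\text{-world}}(z)-\val_{\id\text{-world}}(z)|\). Theorem~\ref{thm:value}, applied to the two-message verifier \(z\) whose queries to label \(X_i\) are the distinguished oracle, gives \(\E_\psi[\Delta_z]\le 2q_z\sqrt{(1+B_z\ln2)/2^n}\). By Eq.~\eqref{eq:threshold-change}, each answer agrees in the two worlds except with probability \(5\Delta_z\) over \(\Theta\).

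The adaptivity issue is that \(z\) is chosen by \(A\) after earlier queries that may depend on \(\psi\). I would resolve it by running the whole hybrid on \(A\)'s own queries with identity answers first, which is the identity-prefix idea. In that world the sequence of \(z\)'s is independent of \(\psi\), apart from the thresholded bits, each of which carries at most one bit of information. Those bits are exactly what Lemma~\ref{lem:query-overlap} charges, with \(B\) equal to the number of oracle calls. So I apply Theorem~\ref{thm:value} to \(V^\ast\) with \(q\) equal to \(A\)'s direct queries and \(B\) equal to its number of \(\cO\)-calls, and add the union-bounded flip probabilities \(\sum_z 5\,\E[\Delta_z]\). Both terms are exponentially small. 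The delicate point is making the order of "prover bit" versus "threshold coupling" consistent, and I expect this to take the most care.

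\textbf{Fixing the oracle.} The averaged advantage is at most \(2^{-cn}\). By Markov's inequality (or the second-moment argument), for each fixed \(A\) and \(n\) the probability over \((\cT,\Theta)\) of advantage above \(2^{-cn/2}\) is at most \(2^{-cn/2}\). These probabilities are summable over \(n\), so Borel--Cantelli gives, almost surely, a negligible advantage for all large \(n\). Countably many uniform \(A\) then give a single realization of \((\cT,\Theta)\) that works for all of them. Fixing \(\Theta\) yields the deterministic completion \(\cO_{\cT,\Theta}\) required by condition 3.
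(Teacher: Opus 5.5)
\textbf{Gap in fixing the oracle.} Your overall architecture matches the paper's: threshold completion, BMNY-style hybrids, Lemma~\ref{lem:sampler-perturbation} for the sampler steps, Theorem~\ref{thm:value} for the reflection steps, and Borel--Cantelli over countably many distinguishers. The final step, however, fails as written. Write $p_Y,p_N$ for $A$'s acceptance probabilities on yes- and no-samples for a fixed realization $(\cT,\Theta)$.

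The middle switch, which exchanges the punctured labels with fresh non-members, is an equality of distributions. It holds only after averaging over the random yes-set $L_n$, not realization by realization. So your chain yields only $|\E_{\cT,\Theta}[p_Y-p_N]|\le 2^{-\Omega(n)}$. Markov's inequality needs a bound on $\E[|p_Y-p_N|]$. The chain cannot rule out an $A$ whose advantage is $+\varepsilon$ on half the oracles and $-\varepsilon$ on the rest.

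You list the second-moment argument only as an alternative to Markov; it is in fact the required step. Apply the averaged bound to the algorithm that runs $A$ on its challenge and, independently, on an internally sampled yes-input (via $\mathsf{GenYes}$), accepting iff both runs accept. Do the same with an internally sampled no-input (via $\mathsf{GenNo}$). This bounds $|\E[p_Y^2-p_Yp_N]|$ and $|\E[p_Yp_N-p_N^2]|$, hence $\E[(p_Y-p_N)^2]\le 2^{-\Omega(n)}$. Only then do Markov and Borel--Cantelli give a single good realization.

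Handling all $s$ samples at once, instead of one sample plus a coordinate hybrid, is fine. Collisions among the $X_i$ cost only $O(s^2/K)$.

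\textbf{The reflection step.} Your ordering differs from the paper's and can be made rigorous, but not for the reason you sketch.

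Your first move is valid. You switch $A$'s direct queries while keeping the real completion $\cO_{\mathrm R}$, and treat its answers as one-bit replies of a $\psi$-dependent deterministic prover. This is Theorem~\ref{thm:value} with $B$ equal to the number of $\cO$-calls.

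The second move, from $\cO_{\mathrm R}$ to the identity-world completion $\cO_{\mathrm I}$, is where your justification goes wrong. The $\psi$-dependence of later queries $z$ through earlier thresholded bits is not something Lemma~\ref{lem:query-overlap} controls. That charge does not license summing the fixed-$z$ bounds $5\,\E_\psi[\Delta_z]$.

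What makes it work is the coupling itself. Until the first disagreement, the run with $\cO_{\mathrm R}$ coincides with the run with $\cO_{\mathrm I}$. The latter, with identity at $y$ and punctured samplers, is entirely independent of $\psi$. Hence every query at which a first disagreement can occur is independent of $\psi$ and of its fresh threshold, which gives the bound $5Q\max_z\E_\psi[\Delta_z]$.

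The paper orders the two changes the other way. It first switches the completion while $A$ still queries $R_\psi$, so $z$ genuinely depends on $\psi$. It controls this with $\E_\psi[\max_z\Delta_z]$, obtained by applying Theorem~\ref{thm:value} to a meta-verifier whose prover first sends $z$. It then switches the direct queries with the $\psi$-independent completion held fixed, needing no prover bits at all.

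Your order trades the $\max_z$ meta-verifier for prover-bit charging. Both close the step once the coupling is anchored at the fully $\psi$-independent world.
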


\begin{proof}
\textsf{QMA} membership and the sampling requirements follow from
Lemma~\ref{lem:gap-sampling}. We prove indistinguishability first for one
input label. Fix a uniform \textsf{QPT} algorithm \(A\), and let \(Q=Q(n)\) be a
polynomial bounding its queries and the description lengths and resources
of the protocols it can submit to the completion oracle, under the stated
encoding conventions.

Sample the base oracle \(\cT\), a uniform \(y\in L_n\), and a uniform
\(\bar y\notin L_n\). Independently sample the threshold table \(\Theta\)
and use it throughout. Whenever a hybrid changes the base oracle, its
completion is changed accordingly, using this same table.

Consider the following hybrids, each of which runs \(A\) and outputs its
result.
\begin{description}
\item[\(H_0\).] Use the original oracle and give \(A\) the input \(y\).
\item[\(H_1\).] Remove the term indexed by \(y\) from every
      \(\mathsf{GenYes}_{n,t}\).
\item[\(H_2\).] Replace the reflection at address \(y\) by the identity.
\item[\(H_3\).] Add \(y\) to \(\mathsf{GenNo}_n\).
\item[\(H_4\).] Keep this modified oracle, but replace the input \(y\)
      by \(\bar y\).
\item[\(H_5\).] Remove \(y\) from the no sampler.
\item[\(H_6\).] Restore the reflection at address \(y\).
\item[\(H_7\).] Restore the term indexed by \(y\) in every yes sampler.
\end{description}
Each hybrid changes only the operation specified, leaving everything else
as in the preceding hybrid. Thus \(H_7\) uses the original oracle and a
uniform no-instance.

We first explain how to account for the completion oracle when comparing
two neighboring base oracles \(\cT\) and \(\cT'\). Change the completion
while keeping the base oracle fixed, then change the base oracle while
keeping the completion fixed. For the first change, couple the executions
until their first different completion answer. On a new query \(z\), the
threshold \(\Theta_z\) is independent of the preceding history, and
Eq.~\eqref{eq:threshold-change} bounds the disagreement probability by
\(5|\val_{\cT}(z)-\val_{\cT'}(z)|\). A repeated query cannot cause
the first disagreement. Hence a union bound over the at most \(Q\)
queries bounds this change by
\[
 5Q\max_z|\val_{\cT}(z)-\val_{\cT'}(z)|,
\]
where the maximum is over the descriptions allowed by the resource bound.
This maximum also covers adaptively chosen descriptions.

\emph{Sampler changes.}
The transitions \(H_0\) to \(H_1\), \(H_2\) to \(H_3\), \(H_4\) to
\(H_5\), and \(H_6\) to \(H_7\) each add or remove one sampler term.
By Lemma~\ref{lem:sampler-perturbation}, the operator-norm change is
exponentially small, uniformly over all tags. The ordinary query hybrid
therefore bounds both the change in every protocol value and the direct
change seen by \(A\). The preceding completion bound preserves exponential
smallness.

\emph{Reflection changes.}
For \(H_1\) to \(H_2\) and \(H_5\) to \(H_6\), the yes samplers no
longer contain \(y\). Fix the labels and all oracle data other than
\(\psi_y\). The remaining background is independent of \(\psi_y\).
Apply Theorem~\ref{thm:value} to a verifier that first receives a protocol
description \(z\) from its prover and then executes that protocol.
The description and subsequent prover messages have polynomial total
length, and the verifier makes polynomially many queries. Its prover may
choose both \(z\) and the later replies as a function of \(\psi_y\).
For each \(z\), the difference between optimal values is at most the
maximum acceptance-probability difference for the same deterministic prover.
The bound on this maximum in Theorem~\ref{thm:value} therefore gives
\[
 \E_{\psi_y}\!\left[
   \max_z|\val_{\cT}(z)-\val_{\cT'}(z)|
 \right]\leq 2^{-\Omega(n)}.
\]
Thus the completion change is exponentially small. For the direct oracle
change, reverse the comparison if necessary and keep the completion
belonging to the world in which the reflection has been removed.
This completion and the punctured samplers are independent
of \(\psi_y\), so Theorem~\ref{thm:value} with no prover messages also
bounds the direct change by \(2^{-\Omega(n)}\).

\emph{The middle switch.}
Put \(S=L_n\setminus\{y\}\). Conditioned on \(S\), the ordered pair
\((y,\bar y)\) is uniform over distinct elements of \(S^c\).
In \(H_3\) and \(H_4\), the testing and yes-sampling components use
exactly the labels in \(S\), and the no sampler is uniform over \(S^c\).
The modified oracle and its completion therefore do not depend on which
element of the pair was originally in \(L_n\). Exchanging \(y\) and
\(\bar y\) leaves their joint distribution unchanged, so \(H_3\) and
\(H_4\) have identical output distributions.

The triangle inequality now gives
\begin{equation}
 |\Pr[H_0=1]-\Pr[H_7=1]|\leq 2^{-\Omega(n)}.
 \label{eq:gap-single-sample}
\end{equation}
These probabilities include the choice of the base oracle and threshold
table. To obtain a fixed realization, write \(p_Y\) and \(p_N\) for
\(A\)'s yes- and no-input acceptance probabilities for fixed
\((\cT,\Theta)\). Apply Eq.~\eqref{eq:gap-single-sample} to an algorithm
that runs \(A\) on its challenge and independently on an internally sampled
yes-label, accepting only if both runs accept. Repeat with an internally
sampled no-label. This bounds the absolute values of
\(\E[p_Y^2-p_Yp_N]\) and \(\E[p_Yp_N-p_N^2]\) by
\(2^{-\Omega(n)}\). Subtracting gives
\[
 \E_{\cT,\Theta}[(p_Y-p_N)^2]\leq 2^{-\Omega(n)}.
\]
Markov's inequality and the Borel--Cantelli lemma imply that
\(|p_Y-p_N|\) is negligible with probability one over \((\cT,\Theta)\).
There are countably many uniform \textsf{QPT} algorithms, so we can fix a realization
for which this holds for every such algorithm.

Finally, polynomially many independent samples follow by the usual hybrid
over the input coordinates. A single-challenge distinguisher chooses a
uniformly random coordinate for its challenge and samples the preceding
coordinates from the yes distribution and the following coordinates from
the no distribution, using the public sampling oracles. Its advantage is
the original distinguishing advantage divided by the number of samples.
Thus a non-negligible advantage on polynomially many samples would
contradict the single-sample bound for the fixed realization. This proves
the remaining requirement of Definition~\ref{def:gap}.
\end{proof}

\end{document}